\documentclass[preprint,12pt]{elsarticle}
\usepackage{amsmath,amssymb,amsfonts}
\usepackage{array}
\usepackage{booktabs}
\usepackage{amsthm}
\usepackage{placeins}
\usepackage{graphicx}
\usepackage[hidelinks]{hyperref}

\newtheorem{theorem}{Theorem}
\newtheorem{lemma}{Lemma}
\newcommand{\Zq}{\mathbb Z_q}
\newcommand{\Zqs}{\mathbb Z_q^{*}}
\newcommand{\negl}{\mathsf{negl}}
\newcommand{\Enc}{\mathsf{Encode}}
\newcommand{\Sign}{\mathsf{Sign}}
\newcommand{\Vfy}{\mathsf{Vfy}}
\newcommand{\KeyGen}{\mathsf{KeyGen}}
\newcommand{\PEnc}{\mathsf{PEnc}}
\newcommand{\PDec}{\mathsf{PDec}}
\newcommand{\Adv}{\mathsf{Adv}}

\begin{document}
\begin{frontmatter}

\title{AVPKEET: Exact-Request Authorization and Publicly Verifiable Equality Testing over Encrypted Data}
\author[npu,dctri]{Xiuping Li}
\author[bjtu]{Kaiwen Wang}
\author[npu]{Wen Sun}
\author[dctri]{Xiaowei Li}
\author[bjtu]{Xiaolin Chang}
\ead{xlchang@bjtu.edu.cn}

\address[npu]{School of Cybersecurity, Northwestern Polytechnical University, P.R. China}
\address[dctri]{Data Communication Technology Research Institute, P.R. China}
\address[bjtu]{School of Cyberspace Science and Technology, Beijing Jiaotong University, P.R. China}

\begin{abstract}
In cloud storage, cryptographic techniques are commonly used to support secure
search and matching while protecting users' data privacy. Public-key encryption
with equality test (PKEET), a widely studied primitive, allows an authorized
computation to determine whether two ciphertexts contain the same message without
decrypting either of them. Applying PKEET to outsourced sensitive data nevertheless
creates three practical risks. (1) Broad authorization may let a requester compare
records beyond the approved case; (2) an untrusted cloud may skip the computation or
report a false answer; (3) a cloud holding reusable testing material may silently
compare stored records without permission and learn hidden relationships among
them. Existing approaches do not jointly address these risks. We therefore propose
AVPKEET with exact-request authorization and public verification. Exact-request
authorization restricts each approval to one stored-ciphertext/query pair and one
registered requester, preventing the approval from being transferred to another
 context. An untrusted cloud worker stores and
forwards encrypted data, while a small isolated controller validates the exact
request and performs the test without releasing its long-term secret. Every response
includes a publicly checkable proof, so anyone holding the request and response can
verify the reported ``same'' or ``different'' answer without a verification secret.
We establish correctness and formal security within the stated threat model. Experiments show that every complete protocol algorithm finishes
within 9.1~ms at the 95th percentile on one CPU thread and that a complete
request--response transcript occupies 1,809 bytes, demonstrating the practicality
of the proposed scheme.
\end{abstract}

\begin{keyword}
Public-key encryption with equality test \sep Exact-request authorization \sep
Public verification \sep Cloud storage \sep Relation privacy
\end{keyword}

\end{frontmatter}

\section{Introduction}

In cloud storage, keeping outsourced data encrypted helps protect
users' privacy against unauthorized disclosure
~\cite{Song2000EncryptedSearch,Curtmola2006SSE}. For example, in
privacy-sensitive healthcare, hospitals can keep diagnosis records encrypted in
cloud services to protect the privacy of patients and care providers
~\cite{Hassan2020HealthcarePKEET}. A physician may then ask whether one named protected record
agrees with a suspected diagnosis, or an insurance reviewer may need to confirm
whether that record agrees with the category stated in a claim. In either case, the useful
answer is only ``same'' or ``different''; revealing the complete medical record or
its decryption key would disclose substantially more information. Encrypted
symptom matching in healthcare and comparison over outsourced IoT data illustrate
this practical need~\cite{Tang2012UserAuthorizedPKEET,Hassan2020HealthcarePKEET,Zhao2022IoTPKEET}.

Public-key encryption with equality test (PKEET) provides a direct cryptographic
tool for such a task. It allows a designated computation to determine whether two
ciphertexts contain the same message without first recovering either plaintext
~\cite{Yang2010PKEET,Li2021LatticePKEET}. Recent compact lattice constructions
further demonstrate continuing interest in equality testing for post-quantum cloud
environments~\cite{He2025CompactPKEET}. Searchable encryption supports broader
retrieval through encrypted indexes and search tokens
~\cite{Song2000EncryptedSearch,Boneh2004PEKS,Curtmola2006SSE}; PKEET is especially
natural when the application requires a focused equality decision. Figure
~\ref{fig:intro-medical-motivation} illustrates this medical use and the three
properties that make the decision useful in practice. The comparison should be
authorized, its result should be verifiable, and unrelated records should not be
classified merely because their ciphertexts are stored together.

\begin{figure}[!ht]
\centering
\includegraphics[width=0.98\textwidth]{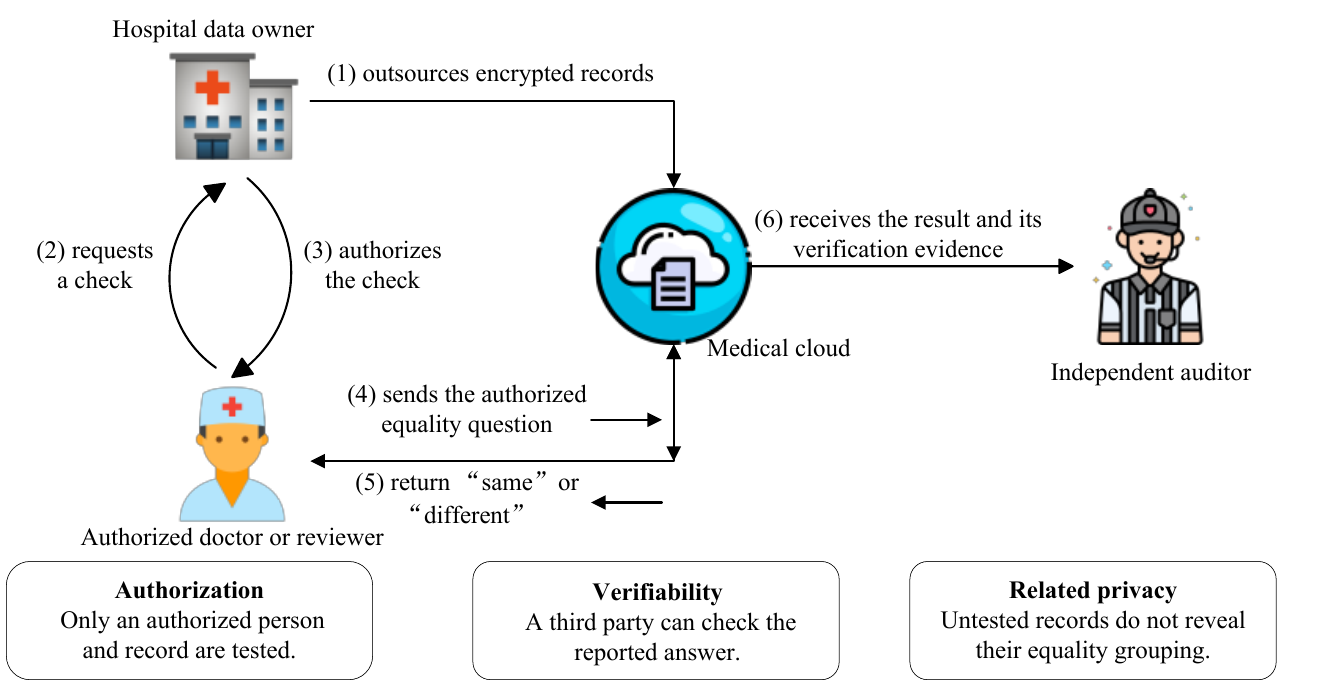}
\caption{Motivating use of PKEET in a medical cloud. (1) A hospital outsources
encrypted records. (2) A doctor or reviewer requests a check on one named record,
and (3) the hospital approves it. (4) The reviewer sends the approved equality
question; (5) the cloud returns ``same'' or ``different''; and (6) an auditor
receives the result and its verification evidence. The three boxes state the
required properties. Authorization limits who and what may be tested,
verifiability makes the reported answer
checkable, and relation privacy protects the grouping of untested records.}
\label{fig:intro-medical-motivation}
\end{figure}

The medical workflow therefore needs more than the basic ability to compare two
encrypted values. It needs an equality-test service that controls who may request a
comparison, makes the returned answer independently checkable, and limits what can
be learned about records outside the approved comparison. Bringing these
requirements together gives rise to the following three connected problems.

\emph{Issue 1: controlling who can learn an equality relation.}
The ability to compare encrypted records is useful, but the test itself discloses
a relation between hidden values. When a compared category comes from a small or
predictable domain, a party that can freely generate candidate ciphertexts and
compare them with a target record can try likely categories one by one and
classify the protected value. This offline message-recovery risk motivated
fine-grained and user-specified authorization in PKEET
~\cite{Tang2011FGPKEET,Tang2012UserAuthorizedPKEET}. Later schemes introduced
flexible ciphertext-level authorization, group-scoped testing, receiver-selected
message filtering, more efficient authorization, and future-period revocation
~\cite{Ma2015PKEETFA,Huang2017FilteredET,Ling2020GroupPKEET,Zhang2026EnergyPKEETFA,Shen2026RevocablePKEET}.
In the medical workflow, these mechanisms have a clear purpose. An institution
should approve the check needed for one case without giving the requester a
general ability to classify unrelated records.

\emph{Issue 2: trusting the answer returned by an outsourced service.}
Authorization determines whether a comparison may be processed; it does not by
itself establish that the cloud returned the correct answer for that comparison.
An outsourced service may skip work or return an incomplete or false result, a
concern that motivated verifiable encrypted search
~\cite{Chai2012VSSE,Zheng2014VABKS}. Recent searchable-encryption systems combine
access control with integrity verification or explicitly consider a malicious
cloud~\cite{Li2025VSEFAC,Jin2026BlockchainVSE}. PKEET research has also studied
designated-user result checking, tester-side ciphertext-component consistency,
and public verification of multi-cloud storage relations
~\cite{Xu2017VPKEET,Zhao2023TVET,Li2024PVPKEET}. These mechanisms verify different
statements for different audiences. In the medical example, the useful goal is
that a reviewer or auditor can check the reported equality decision for the
approved record and category, without receiving a secret that enables another
comparison.

\emph{Issue 3: protecting relations across many encrypted records.}
Even correct equality decisions can accumulate sensitive structure. Positive
answers connect records known to contain the same value, while negative answers
eliminate candidate relations. The PKEET literature on offline message recovery
already demonstrates the risk of combining equality tests with likely plaintext
guesses~\cite{Tang2011FGPKEET,Ling2020GroupPKEET}. Evidence from neighboring
encrypted-search and encrypted-database settings further shows that repeated
queries, access patterns, value frequencies, and equality structure can support
content inference when combined with auxiliary information
~\cite{Liu2014SearchPattern,Cash2015LeakageAbuse,Naveed2015Inference,Oya2021SearchPattern}.
This evidence raises the database-level privacy question of whether records that
have never entered an approved comparison can still hide which of them contain
equal values.

\emph{Why these requirements must be addressed together.}
Prior work has developed valuable authorization, verification, and leakage-control
mechanisms, but these properties place different demands on a common equality-test
interface. In the intended outsourced setting, authorization and relation privacy
are preserved by keeping the comparison capability narrowly usable. Public
verification requires enough evidence for an outside party to check both possible
answers. Cloud deployment must support this
verification without making a reusable testing capability freely available to the
service that stores the records. More generally, research on verifiable
privacy-preserving computation observes the joint value, and the practical
difficulty, of combining confidentiality with verifiable correctness
~\cite{Bontekoe2025VerifiabilitySurvey}. The corresponding challenges can be read
directly against the three issues above.

\begin{itemize}
  \item[\emph{C1.}] \emph{Binding authorization, execution, and proof.} Every relevant field
  must identify the same approved record, query, requester, and context, and the
  returned proof must remain attached to that operation. This challenge connects
  the authorization problem in Issue~1 with the result-trust problem in Issue~2.

  \item[\emph{C2.}] \emph{Making the decision public, not the test capability.} A verifier
  should be able to check either reported answer without a secret, while the
  request and response must not become a reusable means of comparing another pair.
  This challenge links authorization and result trust in Issues~1 and~2 and also
  limits unapproved relation discovery under Issue~3.

  \item[\emph{C3.}] \emph{Confining trust in a cloud deployment.} Storage and bulk processing
  should not require unrestricted access to long-term testing material. A small
  protected service can instead enforce approved use while externally checkable
  evidence establishes the returned decision. This challenge combines the
  capability-control concern in Issue~1 with the outsourced-computation concern in
  Issue~2, while preventing bulk storage from deriving the additional relations
  considered in Issue~3. Recent TEE-backed cloud databases demonstrate the
  practical relevance of isolating sensitive cloud
  computation~\cite{Lutsch2025TrustedCloudDBMS}.

  \item[\emph{C4.}] \emph{Separating approved disclosure from relation privacy.} The security
  model must account for every approved answer as a disclosed relation and protect
  the equality grouping of records that remain outside approved tests. This
  challenge connects the scope of authorization in Issue~1 with the accumulated
  relation exposure in Issue~3.
\end{itemize}

\emph{Our approach.}
To meet Challenges~C1--C4, we propose authorized and publicly verifiable PKEET
(AVPKEET) for owner-approved equality checks between stored ciphertexts and
owner-generated queries within one data-owner domain. This scope matches medical
workflows in which independently registered clinicians or reviewers operate on
records governed by the same healthcare provider. The data owner creates the query
and identifies the stored-record/query pair, while the requester authenticates the
complete request. Bulk storage and message forwarding remain with the untrusted
cloud worker. The reusable test capability is confined to a small owner-isolated
controller, the protected component trusted under our threat model to isolate
owner-specific secrets, preserve owner-domain separation, and enforce authenticated
exact-request admission. Each response carries a proof bound to the same request,
enabling any party holding the request and response to verify the reported bit
without a verifier secret or trust in the worker's claimed result. In addition, the
security model protects
the equality grouping of ciphertext collections that remain outside the
authorization and testing interfaces. Together, these mechanisms provide
controlled use, auditable decisions, and relation privacy for encrypted medical
record comparison. The formal guarantees and their assumptions are stated in
Sections~\ref{sec:preliminaries}--\ref{sec:security}. Our contributions are
summarized below.

\subsection{Our Contribution}

Our contributions are threefold.
\begin{itemize}
  \item \emph{Exact-request authorization with public verification.}
  AVPKEET binds one stored ciphertext and one owner-generated query to the
  requester's identity and registered request key, the owner context, the owner-bound
  controller key, and the ciphersuite; the response remains bound to the exact same
  request. Secret-free verification makes the released decision public without
  turning the response into a reusable test capability. This contribution addresses
  Challenges~C1 and~C2.

  \item \emph{Isolated testing against a malicious bulk cloud.}
  AVPKEET assigns storage and forwarding to a malicious bulk worker and confines
  owner-specific test secrets and request admission to a small, protected,
  owner-isolated controller trusted under our threat model. The worker receives no
  freely applicable test capability, while the controller returns a publicly
  checkable proof for every admitted request. This contribution addresses
  Challenges~C2 and~C3 without expanding the bulk worker's comparison capability.

  \item \emph{Formal security with equality-partition hiding.}
  We establish scoped game-based guarantees for plaintext privacy, request binding,
  and public-verification soundness, together with equality-partition hiding for a
  completely untested ciphertext collection. The partition-hiding theorem is
  conditional on the explicit, nonstandard vector masked-base decisional
  Diffie--Hellman (vector MBDDH) assumption defined in
  Section~\ref{sec:preliminaries}. These definitions separate relations disclosed
  by approved decisions from the equality grouping that remains protected. This
  contribution addresses Challenge~C4 and formalizes the guarantees required by
  Challenges~C1--C3.
\end{itemize}

\subsection{Organization}

The remainder of this paper is organized as follows. Section~\ref{sec:related-work}
reviews equality testing, authorization, public verification, and relational
leakage. Section~\ref{sec:preliminaries} presents the notation, system and threat
models, and security goals. Sections~\ref{sec:construction} and~\ref{sec:security}
describe AVPKEET and establish its correctness and security, respectively.
Section~\ref{sec:performance} evaluates its computation, communication, and
storage costs, and Section~\ref{sec:conclusion} concludes the paper.

\section{Related Work}
\label{sec:related-work}

\subsection{Cryptographic Foundations of Equality Testing}

Public-key encryption with equality test (PKEET) was introduced to compare plaintexts encrypted under different public keys without decrypting them~\cite{Yang2010PKEET}. Research first concentrated on constructing this capability under stronger models and more economical assumptions. Representative results include generic constructions in the random-oracle and standard models~\cite{Lee2019GenericPKEET,Lee2020StandardPKEET} and a standard-model lattice realization with flexible authorization~\cite{Roy2022LatticePKEETFA}. Recent constructions derive PKEET from basic public-key encryption and hashing~\cite{Choi2025GenericPKEET} or from tag-based encryption~\cite{Tezuka2026TBEPKEET}. Other recent schemes add post-quantum delegated testing and plaintext-set filtering, respectively~\cite{Wang2025DelegatedPKEET,Liu2026FilteredPKEET}. This line establishes how an equality-test component can be realized securely, but usually treats access to the resulting test interface as a separate concern. Once testing is delegated to a cloud service, the next question is therefore not merely whether the test is cryptographically sound, but exactly who may invoke it and for which operands.

\subsection{Authorization Granularity and Capability Isolation}

Authorization mechanisms progressively narrowed the scope of delegated testing. Tang introduced proxy-mediated user and user-pair policies~\cite{Tang2011FGPKEET,Tang2012UserAuthorizedPKEET}; Ma et al. subsequently provided flexible modes that include comparison of a specified ciphertext pair~\cite{Ma2015PKEETFA}. Group authorization was also used to constrain testing and address offline message-recovery attacks~\cite{Ling2020GroupPKEET}. More recent schemes offer tightly secure flexible authorization~\cite{Tseng2025TightPKEETFA}, ciphertext-specified delegation~\cite{Wang2025DelegatedPKEET}, or plaintext-set filters~\cite{Liu2026FilteredPKEET}. Fine granularity, however, does not by itself bind every admitted test to a complete authenticated request. AVPKEET binds one stored ciphertext, one owner-generated query, the registered requester and request key, the owner context, the owner-bound controller key, and the ciphersuite to the same signed request and response. This design gives authorization a precise same-owner-domain scope and keeps a freely applicable test capability away from the malicious bulk worker.

Trusted cloud database research uses hardware-protected execution to isolate substantial data-management functionality from an untrusted host~\cite{Lutsch2025TrustedCloudDBMS}. AVPKEET gives its protected component a narrower reference-monitor role: the bulk worker retains storage and communication duties, while each data-owner domain has a cryptographically separate logical controller instance that holds the reusable test material and admits only authenticated, request-bound operations. Multiple logical instances may share one physical platform when their keys, namespaces, and ingress paths remain isolated. Beyond sealed long-term keys and an immutable authenticated directory record, the controller keeps no per-request counter, replay ledger, or trusted clock. An exact authenticated replay repeats only the same approved comparison. Authorization and capability isolation determine whether a test may run; they do not, by themselves, establish that its reported result is correct.

\subsection{Result Verifiability}

Existing notions of verifiability address different objects and different verifiers. Xu et al. enable designated users to check a cloud-computed result~\cite{Xu2017VPKEET}, whereas tester-verifiable PKEET lets the tester check that a ciphertext's decryption and test components encode the same message~\cite{Zhao2023TVET}. Li et al. provide public verification for multi-cloud replica equality and data integrity~\cite{Li2024PVPKEET}; public verification itself is therefore not new. AVPKEET targets a different composition: after exact-request admission, any party holding the request and response, but no verifier secret, can check that the released bit---whether \(0\) or \(1\)---is the algebraically correct decision for those authenticated operands. Publicly checking one decision, however, does not determine what relations may be inferred when many ciphertexts and released decisions are observed together.

\subsection{Relational Leakage and Partition Privacy}

Leakage-abuse attacks in searchable encryption show that individually protected ciphertexts may still support inference when repeated queries and access observations are combined~\cite{Cash2015LeakageAbuse}. Recent work broadens this view from encrypted-index leakage to leakage across the complete searchable-encryption system~\cite{Gui2024SWiSSSE}, while quantitative leakage analysis examines how much information a formally permitted leakage function may reveal~\cite{Boldyreva2024UnderstandingLeakage}. These studies concern searchable encryption rather than PKEET and therefore do not establish the security of AVPKEET. They provide neighboring evidence for a relevant principle: security of an individual operation does not by itself protect relations across an encrypted collection. AVPKEET captures the corresponding PKEET-specific objective through unauthorized equality hiding (UEH), which protects the equality partition of a challenge set that remains entirely outside the authorization and test interfaces. The partition-hiding claim is explicitly conditional on the nonstandard vector MBDDH assumption. Table~\ref{tab:related-comparison} summarizes how representative PKEET schemes cover the resulting authorization, capability-isolation, verification, and partition-privacy requirements.

\begin{table}[!t]
\caption{Functional coverage of representative PKEET schemes}
\label{tab:related-comparison}
\centering
\scriptsize
\parbox{0.98\columnwidth}{\scriptsize \textbf{FA}: ciphertext-scoped authorization; \textbf{RB}: complete request binding; \textbf{IT}: no freely applicable test capability at the untrusted bulk worker; \textbf{PV}: secret-free public verification of the released bit for either outcome; \textbf{PH}: equality-partition hiding for a completely untested set; \textbf{CK}: cross-key comparison.}
\vspace{2pt}
\renewcommand{\arraystretch}{1.06}
\setlength{\tabcolsep}{2.3pt}
\begin{tabular}{@{}>{\raggedright\arraybackslash}p{0.31\columnwidth}
                    *{6}{>{\centering\arraybackslash}p{0.075\columnwidth}}@{}}
\toprule
\textbf{Scheme} & \textbf{FA} & \textbf{RB} & \textbf{IT} & \textbf{PV} & \textbf{PH} & \textbf{CK} \\
\midrule
Yang et al.~\cite{Yang2010PKEET} & --- & --- & --- & --- & --- & \(\checkmark\) \\
Tang~\cite{Tang2011FGPKEET,Tang2012UserAuthorizedPKEET} & \(\triangle\) & --- & --- & --- & --- & \(\checkmark\) \\
Ma et al.~\cite{Ma2015PKEETFA} & \(\checkmark\) & --- & --- & --- & --- & \(\checkmark\) \\
Lee et al.~\cite{Lee2019GenericPKEET,Lee2020StandardPKEET} & \(\triangle\) & --- & --- & --- & --- & \(\checkmark\) \\
Ling et al.~\cite{Ling2020GroupPKEET} & \(\triangle\) & --- & --- & --- & --- & \(\triangle\) \\
Xu et al.~\cite{Xu2017VPKEET} & \(\checkmark\) & --- & --- & \(\triangle\) & --- & \(\checkmark\) \\
Zhao et al.~\cite{Zhao2023TVET} & \(\triangle\) & --- & --- & --- & --- & \(\checkmark\) \\
Li et al.~\cite{Li2024PVPKEET} & \(\triangle\) & --- & --- & \(\triangle\) & --- & \(\checkmark\) \\
Tseng/Wang~\cite{Tseng2025TightPKEETFA,Wang2025DelegatedPKEET} & \(\checkmark\) & --- & --- & --- & --- & \(\checkmark\) \\
\textbf{AVPKEET} & \(\checkmark\) & \(\checkmark\) & \(\checkmark\) & \(\checkmark\) & \(\checkmark\) & --- \\
\bottomrule
\end{tabular}
\vspace{2pt}
\parbox{0.98\columnwidth}{\scriptsize \(\checkmark\): the stated property is provided; \(\triangle\): a related but weaker or differently scoped property is provided; ---: the property is not formalized. RB binds both operands, requester/key, context, and ciphersuite. Xu et al. provide designated-user verification, Zhao et al. provide tester-side component consistency, and Li et al. publicly verify a storage-replica relation. AVPKEET provides the first five properties within one same-owner-domain controller construction, under their respective definitions and assumptions; CK denotes the distinct cross-key scope of conventional PKEET.}
\end{table}
\FloatBarrier

\section{Preliminaries}
\label{sec:preliminaries}

This section introduces the notation and cryptographic tools used by AVPKEET,
then defines the participants, threat model, and security goals. Protocol objects
and temporary variables are deferred to Section~\ref{sec:construction}, where
they acquire a concrete meaning.

\subsection{Notation}

Table~\ref{tab:notation-pke} lists the global notation needed before the
construction. Local variables and protocol objects are defined at their first use.
Identifiers are opaque byte strings issued by the authenticated directory; the
protocol does not interpret application-specific text encodings.

\begin{table}[!t]
\caption{Notation}
\label{tab:notation-pke}
\centering
\footnotesize
\renewcommand{\arraystretch}{1.02}
\setlength{\tabcolsep}{3.5pt}
\begin{tabular}{@{}p{0.18\textwidth}p{0.29\textwidth}p{0.18\textwidth}p{0.29\textwidth}@{}}
\toprule
\textbf{Symbol} & \textbf{Meaning} & \textbf{Symbol} & \textbf{Meaning} \\
\midrule
\(\lambda,q\) & Security parameter and prime group order & \(G_1,G_2,G_T,e\) & Type-3 pairing groups and \(e:G_1\times G_2\to G_T\) \\
\(g_1,g_2\) & Generators of \(G_1\) and \(G_2\) & \(\Zqs\) & Nonzero scalars modulo \(q\) \\
\(\mathsf{pp}\) & Public parameters & \(\mathsf{ver},\mathsf{sid}\) & Protocol-version and suite identifiers \\
\(\Enc\) & Injective typed encoding & \(F,F_t\) & Independent label and proof-randomness PRFs \\
\(H_0,H_{\rm ctx}\) & Independent hash-to-\(G_1\) functions & \(H_{\rm ref}\) & Collision-resistant object-reference hash \\
\(\mathsf{PKE}\) & Public-key encryption scheme & \(\mathsf{SIG}\) & EUF-CMA-secure signature scheme \\
\(\mathsf{oid},\mathsf{uid}\) & Owner and user identities & \(\mathsf{cid}\) & Ciphertext-domain identifier \\
\(\mathsf{kid}_o,\mathsf{kid}_u,\mathsf{kid}_c\) & Pinned owner, user, and controller key identifiers & \(\mathsf{kid}_F\) & Owner label-key version identifier \\
\(\bot\) & Uniform rejection symbol & \(\mathcal P\) & Partition of ciphertext positions \\
\bottomrule
\end{tabular}
\end{table}
\FloatBarrier

\subsection{Cryptographic Preliminaries}

\subsubsection{Type-3 bilinear groups}
Let \(G_1,G_2,G_T\) be cyclic groups of prime order \(q\), with generators \(g_1\in G_1\) and \(g_2\in G_2\). A non-degenerate, efficiently computable map \(e:G_1\times G_2\rightarrow G_T\) is bilinear if
\[
e(X^a,Y^b)=e(X,Y)^{ab}
\]
for all \(X\in G_1\), \(Y\in G_2\), and \(a,b\in\Zq\). There is no efficiently computable isomorphism between \(G_1\) and \(G_2\). The suite identifier \(\mathsf{sid}\) fixes an approximately 128-bit-secure Type-3 curve, unique compressed group encodings, subgroup checks, and every hash and signature ciphersuite.

\subsubsection{Cryptographic building blocks}
AVPKEET uses the following independently instantiated primitives.
\begin{itemize}
  \item \(F:\{0,1\}^{\lambda}\times\{0,1\}^{*}\rightarrow\{0,1\}^{2\lambda}\) is a secure pseudorandom function (PRF) used to derive private equality labels.
  \item \(F_t:\{0,1\}^{\lambda}\times\{0,1\}^{*}\rightarrow\Zqs\) is an independent secure PRF used to derive nonzero proof-randomization scalars.
  \item \(H_0,H_{\rm ctx}:\{0,1\}^{*}\rightarrow G_1\setminus\{1_{G_1}\}\) are independent, domain-separated hash-to-group functions. They are modeled as random oracles in the privacy analyses; the ciphersuite rejects an identity output before returning a hash value.
  \item \(H_{\rm ref}:\{0,1\}^{*}\rightarrow\{0,1\}^{2\lambda}\) is collision resistant and is used only to reference complete protocol objects.
  \item \(\mathsf{PKE}=(\KeyGen,\PEnc,\PDec)\) is an IND-CPA-secure public-key encryption scheme. It provides ordinary decryption; the pairing component is used only for controlled equality testing.
  \item \(\mathsf{SIG}=(\mathsf{SigKg},\Sign,\Vfy)\) is EUF-CMA secure. Owner, user, and controller signing keys are independently generated using the standardized ciphersuite selected by \(\mathsf{sid}\).
\end{itemize}

\subsubsection{Equality-pattern assumption}
We assume that the discrete-logarithm problem is hard in both source groups. Hiding
the equality pattern of an entire ciphertext collection additionally requires the
following vector masked-base decisional Diffie--Hellman (vector MBDDH) assumption.
Let \(\mathcal P\) be a partition of \([n]=\{1,\ldots,n\}\), and let
\([i]_{\mathcal P}\) denote the block containing position \(i\). Sample
\[
a\leftarrow\Zqs,\quad
pk=(g_1^a,g_2^a),\quad
Y\leftarrow G_1\setminus\{1_{G_1}\},
\]
and, independently for every block \(B\in\mathcal P\), sample
\(X_B\leftarrow G_1\setminus\{1_{G_1}\}\). For every position \(i\), sample \(r_i,k_i\leftarrow\Zqs\), conditioned on \(a+k_i\neq0\) and on all \(k_i\) being distinct, and publish
\[
V_i(\mathcal P)=
\left(X_{[i]_{\mathcal P}}^{r_i},\ k_i,\
      (g_2^{a+k_i})^{r_i},\ Y^{r_i}\right).
\]
The experiment has two stages. First, the challenger gives \((pk,Y)\) to the
adversary, which chooses a polynomial \(n\) and two partitions
\(\mathcal P_0,\mathcal P_1\) of the same set \([n]\). The challenger selects
\(\beta\leftarrow\{0,1\}\), samples the remaining values, and returns
\(\{V_i(\mathcal P_\beta)\}_{i=1}^n\). The assumption states that no efficient
adversary can guess \(\beta\) with non-negligible advantage. This is an explicit,
nonstandard assumption about the complete public vector; it is not claimed to
follow from DDH, SXDH, or co-CDH. A collection-level equality-hiding claim needs
this vector form rather than a two-sample special case. The four published fields
in each vector are exactly the equality component exposed by the construction.

\subsubsection{Canonical encoding and input validation}
Every cryptographic input is encoded as
\[
\Enc(\mathsf{dst};x_1,\ldots,x_n),
\]
where the encoding begins with the protocol version, \(\mathsf{sid}\), and a domain-separation tag \(\mathsf{dst}\). Each typed field has a fixed type code, a four-octet big-endian length, and a unique byte encoding. Decoding must consume the entire input and reject unknown types, non-minimal integers, duplicate fields, trailing bytes, and non-canonical group encodings. A distinct tag is assigned to every protocol object and to every hash or PRF purpose used below.

Before a signature verification, pairing, or inversion, an algorithm completes canonical decoding, resolves every key from the authenticated directory, and checks the expected group, curve, prime-order subgroup, non-identity condition, scalar range, and suite identifier. Any failure returns \(\bot\) without a partial result.

\subsection{System Model}

AVPKEET involves a trusted authority, data owners, data users, a split cloud
service, and public verifiers, as shown in Fig.~\ref{fig:pke-system}. Operations
take place within an \emph{owner domain}: a context identifies one owner, one
ciphertext domain, and fixed key versions. Equality testing is defined only for
ciphertexts in the same owner domain. Different registered users may request tests
in that domain, but ciphertexts from different owners or key contexts are not
compared.

\begin{figure}[!ht]
\centering
\includegraphics[width=0.72\textwidth]{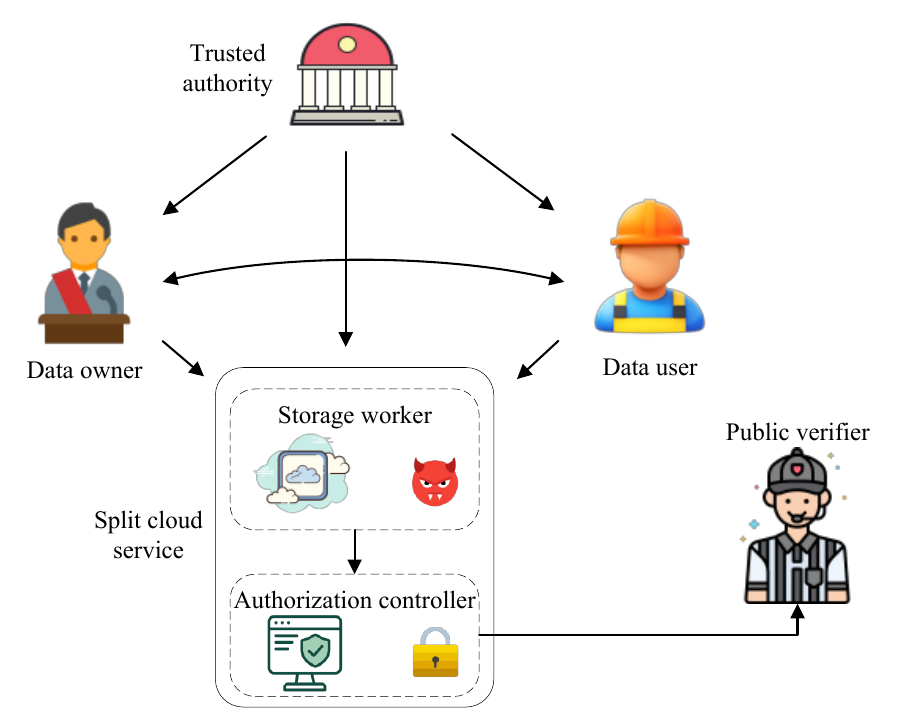}
\caption{AVPKEET system model. Each owner domain has an isolated controller, while the bulk cloud worker remains adversarial.}
\label{fig:pke-system}
\end{figure}

\paragraph{Trusted authority and directory}
The trusted authority (TA) selects the ciphersuite and authenticates versioned
public-key registrations. The directory verifies each role and proof of key
possession, checks the controller public key, and provides a unique authenticated
key record. The TA is not involved in encryption or equality testing after
registration.

\paragraph{Data owner}
A data owner (DO) encrypts messages, derives private equality labels, and signs
stored ciphertexts. For an authenticated data user, the owner constructs a query
ciphertext for the supplied candidate message and authorizes exactly one stored
ciphertext--query pair. The owner is therefore online when granting authorization
and learns the candidate message; query privacy from the owner is not a goal. The
owner never sends the private equality label to the user or controller.

\paragraph{Data user}
A data user (DU) obtains the query and exact authorization from the owner, signs
the complete request, submits it to the controller, and checks the returned proof.
The signature binds the request to the user's registered key.

\paragraph{Split cloud service}
The storage worker (SW) stores and forwards the bulk encrypted data. A much smaller
authorization controller (AC) validates and executes authorized equality tests.
Each owner domain has a logically separate controller instance with its own sealed
test, proof-randomness, and response-signing keys. Instances may share a physical
platform, but their keys, namespaces, and authenticated routes remain separate;
no cross-owner testing interface is exposed. A TEE/HSM-backed service or an
independently administered authorization service can implement this boundary using
sealed tenant keys, authenticated routing, and pinned attestation or service
certificates. Processing needs no request counter, replay ledger, trusted clock, or
rollback-protected token state.

\paragraph{Public verifier}
Any party holding the exact request and response can verify the owner, user, and
controller signatures and check the equality proof without a secret key.

\subsection{Threat Model}

The model separates adversarial capabilities, trusted components, and information
that an authorized test intentionally reveals.

\begin{itemize}
\setlength{\itemsep}{2pt}
  \item \textbf{Adversarial control.} The adversary is probabilistic polynomial
  time and may control SW, the cloud host and network outside AC, any set of users,
  and any non-target owners. It may register keys for corrupted principals; obtain
  encryptions and authorizations allowed by the security games; and read, suppress,
  replay, reorder, or replace public transcripts. Security claims concern
  ciphertexts created by a designated honest owner, called the target owner.

  \item \textbf{Authenticated infrastructure and channels.} The TA and directory
  form an authenticated, role-bound, non-equivocating service. A corrupted owner
  may sign objects only under its own registered identity and may reach only its
  own controller instance. Registration uses an authenticated channel. Candidate
  messages and authorization responses use a confidential authenticated DO--DU
  channel. A signed DU--AC request may be relayed over an untrusted network because
  the controller authenticates its complete contents and registered user key.

  \item \textbf{Controller boundary.} The target owner's AC instance is trusted
  to protect its long-term secrets, enforce owner-domain separation, admit only
  authenticated exact requests, derive proof randomness as specified, and release
  neither protected intermediate values nor outcomes of rejected requests. The
  equality bit itself need not be accepted on trust: public verification must reject
  an incorrect result even if a faulty result generator claims and signs it.
  Extraction of a controller secret or failure of owner isolation is outside the
  model.

  \item \textbf{Availability and replay.} A controlled SW may suppress a request
  or response. Availability and
  completeness are therefore not security goals. Replaying an unchanged valid
  request repeats only the same authorized comparison; the scheme does not claim
  one-time use or revocation of an authorization already issued.

  \item \textbf{Authorized leakage.} Every released positive or negative result
  reveals the outcome of one authorized equality test. Positive results also imply
  their transitive equality relations, whereas negative results rule out candidate
  relations. This information cannot be revoked after release. Repeated authorized
  guesses over a low-entropy message space may recover a message; the privacy games
  account for this risk through conditional min-entropy and the number of guesses.

  \item \textbf{Endpoint and key compromise.} Voluntary credential sharing,
  compromise of an honest owner or user endpoint, and side-channel attacks are
  outside the model. Learning a private equality label enables an offline check
  against a public equality component. Compromise of an owner's label key exposes
  historical labels under that key version, so the scheme provides no forward
  security for such a compromise.

  \item \textbf{Key rotation.} Changing the controller key or owner label-key
  version creates a new equality context. Old ciphertexts remain testable only if
  the corresponding old keys are retained; otherwise their equality components
  must be rebuilt and signed again. The directory never substitutes a new
  controller key under an existing identifier, and an owner never reuses a
  label-key identifier for different key material.
\end{itemize}

\subsection{Security Goals}

Within this threat model, AVPKEET targets five properties. Each statement below
first gives the intended guarantee and then identifies its essential scope.

\begin{itemize}
\setlength{\itemsep}{2pt}
  \item \textbf{Correctness.} Honest decryption recovers the message, an honest
  equality test reports whether the two messages match, and public verification
  accepts an honest response. The only exceptions are negligible PRF or hash
  collisions.

  \item \textbf{Authorized plaintext privacy (APP).} A fresh ciphertext of the
  target owner hides its message despite public metadata and authorized results for
  unrelated objects. The challenge message cannot be submitted to an interface
  that derives its private equality label; APP is therefore not a chosen-ciphertext
  security notion.

  \item \textbf{Unauthorized equality hiding (UEH).} Before any member of a
  protected ciphertext set is submitted for authorization or testing, the public
  ciphertexts hide how their plaintexts are partitioned by equality. Other
  transcripts may reveal identities, contexts, lengths, and results for objects
  outside that set. Once a protected component participates in a released test,
  UEH makes no claim about its remaining relations.

  \item \textbf{Authorization unforgeability and request binding (AURB).} The
  controller accepts only an owner-issued authorization for the exact ciphertext,
  query, registered user, context, and ciphersuite named in the request. Moving an
  authorization to any other value must fail; replaying the unchanged request is
  not a forgery.

  \item \textbf{Public-verification soundness (PVS).} For a fixed valid request,
  an accepted response contains the equality bit determined by the authenticated
  components and pinned keys. Suppressing the response is an availability failure,
  not an accepted false result.
\end{itemize}

\section{The Proposed AVPKEET Scheme}
\label{sec:construction}

\subsection{Algorithm Description}

Figure~\ref{fig:avpkeet-workflow} summarizes how the algorithms compose into an
end-to-end execution. Steps~(1)--(2) establish the public parameters, registered
key bindings, and owner-domain context. Steps~(3)--(8) prepare a stored
ciphertext and carry out one exact-request-authorized equality test followed by
public verification. The figure shows only the externally visible objects; the
algorithms below define their contents and validation rules.

\begin{figure}[!ht]
\centering
\includegraphics[width=0.94\textwidth]{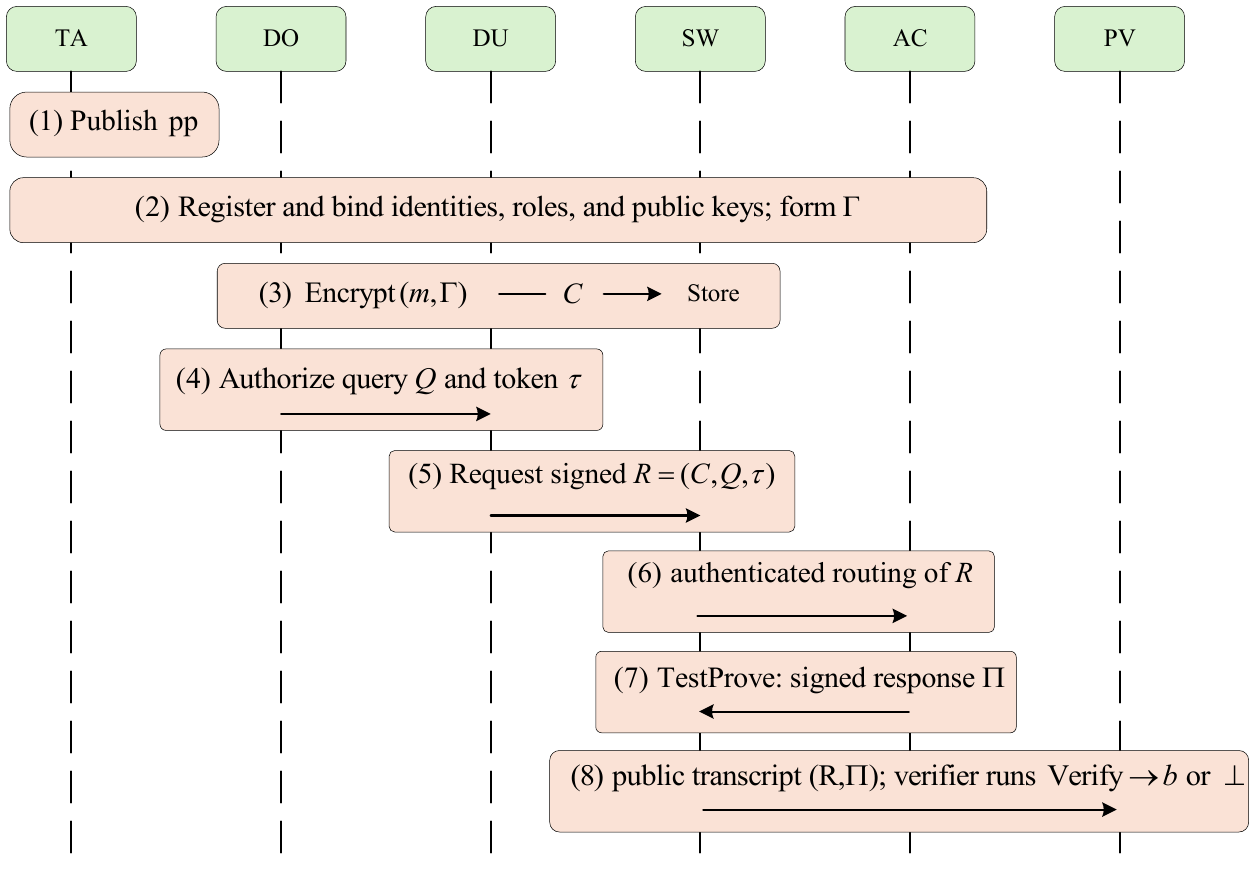}
\caption{AVPKEET workflow. TA, DO, DU, SW, AC, and PV denote the trusted
authority, data owner, data user, storage worker, authorization controller, and
public verifier, respectively. The numbered steps separate one-time setup and
registration from ciphertext preparation and one authorized equality test.}
\label{fig:avpkeet-workflow}
\end{figure}

\noindent\textbf{Scheme 1: Authorized and Publicly Verifiable PKEET.}
A successful execution produces a signed stored ciphertext \(C\), an
owner-generated query \(Q\) and authorization \(\tau\), a user-signed request
\(R\), and a publicly verifiable response \(\Pi\). Unless stated otherwise, every
receiving algorithm applies the canonical decoding, key-resolution, and group
validation rules of Section~\ref{sec:preliminaries}; any failed check returns
\(\bot\).

\subsubsection{Setup, registration, and context}

\noindent\textbf{\textsf{Setup}}\((1^\lambda)\).
The TA selects the Type-3 group parameters and independent primitives from
Section~\ref{sec:preliminaries}, fixes the protocol and ciphersuite identifiers
\((\mathsf{ver},\mathsf{sid})\), and publishes the resulting parameters
\(\mathsf{pp}\).

\noindent\textbf{\textsf{KeyGen}.}
The DO samples a label key
\(K_{o,\mathsf{kid}_F}\leftarrow\{0,1\}^{\lambda}\) under a fresh immutable
version \(\mathsf{kid}_F\), generates an encryption key pair
\((pk_o^{\mathsf{enc}},sk_o^{\mathsf{dec}})\), and generates a signing key pair
\((vk_o,sk_o^{\mathsf{sig}})\). A DU independently generates
\((vk_u,sk_u^{\mathsf{sig}})\). During trusted provisioning, the AC instance for
owner \(\mathsf{oid}\) samples the test secret \(\alpha_o\leftarrow\Zqs\) and
proof-randomness key \(K_{t,o}\leftarrow\{0,1\}^{\lambda}\), sets
\(pk_c=(pk_{c,1},pk_{c,2})=(g_1^{\alpha_o},g_2^{\alpha_o})\), and generates its
response-signing pair \((vk_c^{\mathsf{pf}},sk_c^{\mathsf{pf}})\).

\noindent\textbf{\textsf{Register}.}
The directory verifies proof of possession and pins each identity, role,
ciphersuite, version, and key identifier. It records \(\mathsf{kid}_F\) as
immutable owner metadata, binds the controller identifier \(\mathsf{kid}_c\) to
one owner, and checks
\(e(pk_{c,1},g_2)=e(g_1,pk_{c,2})\). The activated AC seals the identifiers and
keys as
\((\mathsf{ver},\mathsf{sid},\mathsf{oid},\mathsf{kid}_c,pk_c,
vk_c^{\mathsf{pf}},\alpha_o,K_{t,o},sk_c^{\mathsf{pf}})\). On every invocation,
its public controller tuple must match both the sealed record and the directory
record byte for byte.

\noindent\textbf{Context formation.}
After registration, define the controller reference and owner-domain context as
\[
\begin{aligned}
h_c&=H_{\rm ref}(\Enc(\mathsf{controllerref};\mathsf{oid},
        \mathsf{kid}_c,pk_c,vk_c^{\mathsf{pf}})),\\
\Gamma&=(\mathsf{ver},\mathsf{sid},\mathsf{oid},\mathsf{cid},
        \mathsf{kid}_o,\mathsf{kid}_F,\mathsf{kid}_c,h_c).
\end{aligned}
\]
The directory resolves \((\mathsf{oid},\mathsf{kid}_o)\) to the owner's public
encryption and signature keys and \((\mathsf{oid},\mathsf{kid}_c)\) to
\((pk_c,vk_c^{\mathsf{pf}})\). Thus \(\Gamma\) identifies all public information
needed to interpret one owner-domain ciphertext.

\subsubsection{Ciphertext generation and decryption}

\noindent\textbf{\textsf{CoreEnc}}\((\Gamma,\mu)\).
This internal algorithm converts a private equality label \(\mu\) into a public
equality component. After validating \(\Gamma\) and resolving \(pk_c\), the DO
samples independent \(r,k\leftarrow\Zqs\) and computes
\[
\begin{aligned}
X&=H_0(\Enc(\mathsf{hlabel};\Gamma,\mu)),\qquad
Y=H_{\rm ctx}(\Enc(\mathsf{ctxtag};\Gamma)),\\
c_1&=X^r,\qquad c_2=k,\qquad
c_3=(pk_{c,2}g_2^k)^r,\qquad c_4=Y^r.
\end{aligned}
\]
It resamples \(r,k\) if \(pk_{c,2}g_2^k=1_{G_2}\) and returns the equality
component \((c_1,c_2,c_3,c_4)\). Its context consistency can be checked directly
from the public pairing relation
\[
e(Y,c_3)=e(c_4,pk_{c,2}g_2^{c_2}).
\]

\noindent\textbf{\textsf{Encrypt}}\((m,\Gamma)\).
The DO derives the private label and independently encrypts the message under its
ordinary encryption key:
\[
\mu=F(K_{o,\mathsf{kid}_F},
 \Enc(\mathsf{label};\mathsf{oid},\mathsf{cid},\mathsf{kid}_F,m)),\qquad
C_0=\PEnc_{pk_o^{\mathsf{enc}}}(m).
\]
The DO obtains the equality component from \textsf{CoreEnc} and returns the signed
stored ciphertext
\[
\begin{aligned}
(c_1,c_2,c_3,c_4)&=\mathsf{CoreEnc}(\Gamma,\mu),\\
M_C&=\Enc(\mathsf{ciphertext};\Gamma,C_0,c_1,c_2,c_3,c_4),\\
C&=(M_C,\sigma_C),\qquad
\sigma_C=\Sign_{sk_o^{\mathsf{sig}}}(M_C).
\end{aligned}
\]
Here \(C_0\) is the conventional PKE component and the remaining fields form its
equality component.

\noindent\textbf{\textsf{Decrypt}}\((C)\).
The DO verifies \(\sigma_C\), decrypts
\(m=\PDec_{sk_o^{\mathsf{dec}}}(C_0)\), and recomputes \(\mu,X,Y\) exactly as in
\textsf{Encrypt} and \textsf{CoreEnc}. It accepts only if
\[
e(X,c_3)=e(c_1,pk_{c,2}g_2^{c_2})
\quad\text{and}\quad
e(Y,c_3)=e(c_4,pk_{c,2}g_2^{c_2}).
\]
It returns \(m\) on success and \(\bot\) otherwise.

\subsubsection{Exact authorization and signed request}

\noindent\textbf{\textsf{Authorize}}\((C,m',\mathsf{uid},\mathsf{kid}_u)\).
After authenticating the DU, the DO verifies the canonical encoding and its
signature on \(C\), resolves the registered user key \(vk_u\), and derives the
private label \(\mu'\) of the candidate message \(m'\) using the same label PRF
as in \textsf{Encrypt}. It runs
\textsf{CoreEnc} until it obtains
\(Q=(q_1,q_2,q_3,q_4)\) with \(q_2\neq c_2\). The owner then computes
\[
\begin{aligned}
h_C&=H_{\rm ref}(\Enc(\mathsf{ctref};C)),
h_Q=H_{\rm ref}(\Enc(\mathsf{qref};Q)),\\
M_\tau&=\Enc(\mathsf{authorization};\Gamma,h_C,h_Q, \mathsf{uid},\mathsf{kid}_u,vk_u),\\
\sigma_\tau&=\Sign_{sk_o^{\mathsf{sig}}}(M_\tau),\qquad
\tau=(M_\tau,\sigma_\tau).
\end{aligned}
\]
The output \((Q,\tau)\) authorizes this registered user to compare exactly \(C\)
with \(Q\).

\noindent\textbf{\textsf{Request}}\((C,Q,\tau)\).
The DU verifies the owner signatures on \(C\) and \(\tau\), the references
\(h_C,h_Q\), and the identity and registered key encoded in \(\tau\). It then
signs the complete request and returns
\[
\begin{aligned}
M_R&=\Enc(\mathsf{request};\Gamma,C,Q,\tau),\\
R&=(M_R,\sigma_R),\qquad
\sigma_R=\Sign_{sk_u^{\mathsf{sig}}}(M_R).
\end{aligned}
\]

\subsubsection{Authorized test and public verification}

\noindent\textbf{\textsf{TestProve}}\((R)\).
Authenticated ingress routes \(R\) to the AC instance named by
\((\mathsf{oid},\mathsf{kid}_c)\). The AC admits the request only after three
groups of checks. First, it recomputes
\(pk_c=(g_1^{\alpha_o},g_2^{\alpha_o})\) and \(h_c\), and requires the public tuple
\((\mathsf{ver},\mathsf{sid},\mathsf{oid},\mathsf{kid}_c,pk_c,
vk_c^{\mathsf{pf}})\) to match the context, directory record, and sealed record.
Second, it resolves the
token-bound \((\mathsf{uid},\mathsf{kid}_u)\), requires the resulting key to equal
the \(vk_u\) in \(\tau\), verifies \(\sigma_C,\sigma_\tau,\sigma_R\), and
requires the recomputed \(h_C,h_Q\) to equal the references in \(\tau\). Third,
it checks every repeated field and all group and scalar encodings. It rejects
\(c_2=q_2\), an identity among \(c_1,c_3,c_4,q_1,q_3,q_4\), or an identity
value of \(pk_{c,2}g_2^{c_2}\) or \(pk_{c,2}g_2^{q_2}\).

For an admitted request, the AC removes the controller mask and verifies that both
components use the context base \(Y\):
\[
\begin{aligned}
E&=c_3^{1/(\alpha_o+c_2)},&
E'&=q_3^{1/(\alpha_o+q_2)},&
Y&=H_{\rm ctx}(\Enc(\mathsf{ctxtag};\Gamma)),\\
e(c_4,g_2)&=e(Y,E),&
e(q_4,g_2)&=e(Y,E').
\end{aligned}
\]
It rejects if either context equation fails and otherwise sets
\(b=\mathbf 1[e(c_1,E')=e(q_1,E)]\). To make this result publicly verifiable, it
computes
\[
\begin{aligned}
h_R&=H_{\rm ref}(\Enc(\mathsf{requestref};R)),\\
t&=F_t(K_{t,o},\Enc(\mathsf{proofrand};\Gamma,h_R))\in\Zqs,\\
\pi_1&=E^t,\pi_2=(E')^t,
\pi_3=g_2^t,\pi_4=g_1^t.
\end{aligned}
\]
The signed proof response is
\[
\begin{aligned}
M_\Pi&=\Enc(\mathsf{proof};\Gamma,h_R,b,\pi_1,\pi_2,\pi_3,\pi_4),\\
\sigma_\Pi&=\Sign_{sk_c^{\mathsf{pf}}}(M_\Pi),\qquad
\Pi=(M_\Pi,\sigma_\Pi).
\end{aligned}
\]
Only \(\Pi\) is released; the intermediate values \(E,E'\), and \(t\) remain
inside the AC.

\noindent\textbf{\textsf{Verify}}\((R,\Pi)\).
Using only public information, a verifier resolves the pinned owner, user, and
controller keys; verifies \(\sigma_C,\sigma_\tau,\sigma_R\) and the controller
signature on \(\Pi\); recomputes \(h_C,h_Q,h_R,h_c\); and checks every repeated
field. The key used for \(\sigma_R\) must be the directory
key selected by the token-bound \((\mathsf{uid},\mathsf{kid}_u)\) and must equal
the \(vk_u\) in \(\tau\). The verifier rejects \(c_2=q_2\), an identity among
\(c_1,c_3,c_4,q_1,q_3,q_4,\pi_1,\ldots,\pi_4\), or an identity value of
\(pk_{c,2}g_2^{c_2}\) or \(pk_{c,2}g_2^{q_2}\). With
\(Y=H_{\rm ctx}(\Enc(\mathsf{ctxtag};\Gamma))\), it then checks
\[
\begin{aligned}
e(\pi_4,g_2)&=e(g_1,\pi_3),&
e(pk_{c,1}g_1^{c_2},\pi_1)&=e(\pi_4,c_3),\\
e(pk_{c,1}g_1^{q_2},\pi_2)&=e(\pi_4,q_3),&
e(c_4,\pi_3)&=e(Y,\pi_1),&
e(q_4,\pi_3)&=e(Y,\pi_2).
\end{aligned}
\]
Finally, it computes
\(\widehat b=\mathbf 1[e(c_1,\pi_2)=e(q_1,\pi_1)]\) and returns \(b\) only if
\(\widehat b=b\); otherwise it returns \(\bot\).

\subsection{Design Insights}

\subsubsection{Exact-request authorization}
The owner signs references to one stored ciphertext and one owner-generated query,
together with the registered user's identity and key. The user then signs the
complete request. Because the AC and public verifier recompute these references and
resolve the same directory key, an authorization cannot be moved to another
ciphertext, query, context, or user. Replaying identical request bytes repeats only
the same comparison.

\subsubsection{Hidden equality structure and context binding}
Randomization by \(r\) hides the label base in \(c_1=X^r\), while
\(c_4=Y^r\) binds that exponent to the complete context. The AC checks this
relation after removing the controller mask, and the public verifier checks its
nonzero-randomized form in the response. For validated prime-order elements, these
pairing equations hold exactly when \(c_3\) and \(c_4\) contain the same exponent
relative to their public bases. No additional well-formedness proof is needed.
Object authenticity and owner-domain admission remain enforced by signatures,
directory pinning, and isolated controller ingress.

To see the equivalence, write \(Y=g_1^y\),
\(pk_{c,2}g_2^{c_2}=g_2^a\), \(c_3=g_2^z\), and \(c_4=g_1^w\).
The public relation \(e(Y,c_3)=e(c_4,pk_{c,2}g_2^{c_2})\) holds exactly when
\(yz=wa\bmod q\). Since \(Y\) and \(pk_{c,2}g_2^{c_2}\) are nonidentity,
\(r=z/a=w/y\) is well defined, giving \(c_3=(pk_{c,2}g_2^{c_2})^r\) and
\(c_4=Y^r\).

The visible scalar \(c_2=k\) must be sampled independently and uniformly for every
component. Reusing it within a context makes the corresponding relation publicly
testable. Honest sampling gives only the \(O(N^2/q)\) collision event accounted for
in the UEH analysis and requires no persistent counter.

\subsubsection{Narrow cloud trust boundary}
Possession of \(\alpha_o\) enables unrestricted equality tests, so this secret
cannot be given to the cloud worker. AVPKEET keeps each owner's test and
proof-randomness keys in a separate AC instance while leaving bulk storage and
delivery to SW. The AC receives only an authenticated request, not the candidate
message or private label, and its signed result remains independently verifiable.

\subsubsection{Authenticated and non-degenerate public proofs}
Every accepted object is covered by a signature or by a collision-resistant
reference inside a signed object. The public equations bind \(\pi_1,\pi_2\) to the
two controller-masked components and their context terms, while \(\pi_3,\pi_4\)
prove a common nonzero randomization exponent. Deriving that exponent from the
complete request gives distinct requests pseudorandom proofs and makes exact replay
idempotent. The final equation determines the accepted bit without exposing the
protected intermediate values.

\subsubsection{Relationship to searchable encryption}
An application may store one AVPKEET ciphertext per searchable label and use an
accepted positive result to select a payload. A complete searchable-encryption
system would additionally need an authenticated index, result-completeness
mechanism, document encryption, and access policy. These application mechanisms
remain separate from the pairwise AVPKEET primitive.

\subsection{Correctness Analysis}

For a stored component generated with exponent \(r\) and a query generated with
\(r'\), the controller obtains the unmasked values \(E=g_2^r\) and
\(E'=g_2^{r'}\):
\[
\begin{aligned}
c_3&=g_2^{(\alpha_o+c_2)r},&
E&=c_3^{1/(\alpha_o+c_2)}=g_2^r,\\
q_3&=g_2^{(\alpha_o+q_2)r'},&
E'&=q_3^{1/(\alpha_o+q_2)}=g_2^{r'}.
\end{aligned}
\]
Let \(X=H_0(\Enc(\mathsf{hlabel};\Gamma,\mu))\) and
\(X'=H_0(\Enc(\mathsf{hlabel};\Gamma,\mu'))\). Bilinearity gives
\[
e(c_1,E')=e(X,g_2)^{rr'}
\quad\text{and}\quad
e(q_1,E)=e(X',g_2)^{rr'}.
\]
Since \(r,r'\neq0\), the two values are equal exactly when \(X=X'\). Equal
messages produce the same PRF label and hence \(b=1\). Conversely, \(b=1\)
implies equal labels except with an \(H_0\) collision; distinct messages share a
label only with the negligible collision probability of the \(2\lambda\)-bit PRF
range.

The context and decryption checks are also satisfied for honest components:
\[
\begin{aligned}
e(c_4,g_2)&=e(Y,E),&e(q_4,g_2)&=e(Y,E'),\\
e(X,c_3)&=e(c_1,pk_{c,2}g_2^{c_2}),&
e(Y,c_3)&=e(c_4,pk_{c,2}g_2^{c_2}).
\end{aligned}
\]
Correctness of the underlying PKE additionally gives
\(\PDec_{sk_o^{\mathsf{dec}}}(C_0)=m\).

For the public proof, \(\pi_1=E^t,\pi_2=(E')^t,\pi_3=g_2^t\), and
\(\pi_4=g_1^t\). Bilinearity therefore satisfies all five verification equations;
for example,
\[
e(pk_{c,1}g_1^{c_2},\pi_1)
=e(g_1^{\alpha_o+c_2},g_2^{rt})
=e(\pi_4,c_3),\qquad
e(c_4,\pi_3)=e(Y,\pi_1).
\]
The query equations follow identically, and raising the controller's equality
equation to nonzero \(t\) gives
\(e(c_1,\pi_2)=e(q_1,\pi_1)\). Thus \textsf{Verify} accepts every honest response.
Finally,
\textsf{Authorize} terminates because each independent resampling satisfies
\(q_2\neq c_2\) with probability \(1-1/(q-2)\).

\section{Security Analysis}
\label{sec:security}

We give game-based claims for the owner-domain primitive. All games apply the canonical validation rules and the authenticated directory from Section~\ref{sec:preliminaries}. Let \(N_o,N_u,N_c\) bound the numbers of honest owner, user, and controller signing keys, and let \(N_F,N_E\) bound honest label-key versions and encryption keys. Let \(q_0\) bound queries to \(H_0\), and let \(q_t\) bound distinct target-controller requests. Single-user primitive advantages are multiplied by the displayed target-key guessing factors.

\subsection{Authorized Plaintext Privacy}

In \(\mathsf{Exp}^{\mathsf{APP}}\), the adversary controls SW and arbitrary users and non-target owners, each of which is confined to its own controller instance. It receives public parameters and may obtain encryptions, exact authorizations, and proof responses for adaptively chosen nonchallenge messages. It selects an honest target owner, one of its contexts, and two distinct equal-length messages \(m_0,m_1\) that have not been submitted to that owner's encryption or authorization interfaces. The challenger returns \(\mathsf{Encrypt}(m_\beta,\Gamma)\) for random \(\beta\). Neither challenge message nor an encoding-equivalent value may subsequently enter a target-owner encryption, authorization, request, or test query; the target owner, its label and decryption keys, and its AC instance remain uncorrupted. Public metadata and every permitted equality result are leakage. The advantage is \(|\Pr[\widehat\beta=\beta]-1/2|\).

\begin{theorem}[Authorized plaintext privacy]
Let \(q_\ell\) bound labels generated for distinct nonchallenge inputs under the target owner. Then
\[
\begin{aligned}
\Adv_{\mathcal A}^{\mathsf{APP}}
&\leq N_F\Adv_F^{\mathsf{PRF}}
  +N_E\Adv_{\mathsf{PKE}}^{\mathsf{IND\text{-}CPA}}\\
&\quad+\frac{2(q_0+q_\ell+1)}{2^{2\lambda}}
  +\negl(\lambda).
\end{aligned}
\]
\end{theorem}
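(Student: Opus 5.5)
We will prove the bound by a short sequence of game hops that turns the challenge ciphertext into one that is independent of \(\beta\). \(\mathsf{G}_0\) is \(\mathsf{Exp}^{\mathsf{APP}}\).

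\emph{Step 1: replace the label PRF.} In \(\mathsf{G}_1\) the challenger guesses which of the at most \(N_F\) honest label-key versions belongs to the target context. It then replaces \(F(K_{o,\mathsf{kid}_F},\cdot)\) under that version with a lazily sampled random function into \(\{0,1\}^{2\lambda}\). The reduction answers every encryption and authorization query with its PRF oracle. It simulates the target AC directly, which needs only \(\alpha_o\) and \(K_{t,o}\); both are sampled by the reduction. This step costs \(N_F\Adv_F^{\mathsf{PRF}}\).

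Because the challenge messages \(m_0,m_1\) (and every encoding-equivalent value) are barred from all target-owner label-deriving interfaces, the challenge label \(\mu^\ast\) is a fresh uniform \(2\lambda\)-bit string. It is independent of the at most \(q_\ell\) nonchallenge labels, and \(\beta\) enters it only through the point at which the random function is evaluated.

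\emph{Step 2: make \(X^\ast\) independent of \(\beta\).} This is the step I expect to be the main obstacle, since \(\mu^\ast\) feeds the random oracle \(H_0\). Let \(\mathsf{Bad}\) be the event that \(\mu^\ast\) coincides with one of the nonchallenge labels, or that \(\Enc(\mathsf{hlabel};\Gamma,\mu^\ast)\) is ever queried to \(H_0\) by the adversary or implicitly by the simulation. The adversary never sees \(\mu^\ast\), so each of these at most \(q_0+q_\ell+1\) points matches with probability \(2^{-2\lambda}\).

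Counting the label-collision events and the \(H_0\)-query events separately gives the term \(2(q_0+q_\ell+1)/2^{2\lambda}\). Conditioned on \(\neg\mathsf{Bad}\), the value \(X^\ast=H_0(\cdot)\) is a uniform non-identity element of \(G_1\) that is independent of every other view. In \(\mathsf{G}_2\) we therefore sample \(X^\ast\) uniformly at the outset.

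One must also check that authorized results reveal nothing about \(X^\ast\). The challenge ciphertext may be stored and even authorized, but its label never equals a queried label except under \(\mathsf{Bad}\). Any test involving it therefore returns \(0\), a value the simulator can produce without knowing \(\beta\). Note that such tests are in fact excluded by the game's restriction on the challenge's equality label. Proofs for all other requests are generated honestly.

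\emph{Step 3: switch the PKE component.} In \(\mathsf{G}_2\) the components \((c_1,c_2,c_3,c_4)\), the context \(\Gamma\), and the signature on \(M_C\) no longer depend on \(\beta\): \(c_1=(X^\ast)^r\) with fresh \(X^\ast\), and \(c_3,c_4\) use \(pk_c\) and \(Y\) only. The only remaining dependence is \(C_0=\PEnc(m_\beta)\).

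In \(\mathsf{G}_3\) we encrypt \(m_0\) regardless of \(\beta\). The reduction guesses the target encryption key among \(N_E\) candidates and embeds the IND-CPA challenge key. It never needs \(sk_o^{\mathsf{dec}}\), because \textsf{Decrypt} is not an oracle in the game and the target owner stays uncorrupted. This costs \(N_E\Adv^{\mathsf{IND\text{-}CPA}}_{\mathsf{PKE}}\).

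In \(\mathsf{G}_3\) the adversary's advantage is \(0\). Summing the hops yields the stated bound, with the \(\negl(\lambda)\) term absorbing the sampling corner cases: resampling when \(pk_{c,2}g_2^k=1\), \(q_2=c_2\) collisions, and identity outputs of the hash-to-group functions.
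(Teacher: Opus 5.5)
Your proposal is correct and follows the paper's proof essentially step for step. You guess the label-key version and replace its PRF by a random function, use challenge freshness to make the challenge label uniform, and abort on collisions with other target-owner labels or $H_0$ queries so that $X$ (hence $c_1=X^r$) is independent of the message; you then guess the encryption key and apply an IND-CPA hybrid to $C_0$, while signatures and the remaining components are produced identically. One small bookkeeping point: the paper's factor $2$ in $2(q_0+q_\ell+1)/2^{2\lambda}$ comes from bounding both candidate labels $\mu_0,\mu_1$, not from counting two event types separately, but your single-label count already fits within the stated term.
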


\begin{proof}
Guess the target label-key version and replace its PRF by a random function. Freshness and \(m_0\neq m_1\) make the two candidate labels independent uniform \(2\lambda\)-bit strings. Abort if either collides with a previous target-owner label or has already appeared in an adversarial \(H_0\) query, which gives the stated statistical term. Conditioned on no abort, \(X\) is a fresh uniform nonidentity element of \(G_1\), and \(X^r\) is uniform for every nonzero \(r\). The other equality-component fields depend on \(r,k,\alpha_o\), and the public context, but not on the challenge message. Guessing the target encryption key and applying an IND-CPA hybrid therefore replaces \(C_0\) by an encryption of the other equal-length candidate. Owner signatures are generated identically over the resulting public transcripts. Non-target owners and their proof queries use independent controller keys and namespaces and hence do not enter the target-owner reduction.
\end{proof}

The restriction on challenge testing is inherent: an authorized equality result for a known challenge candidate intentionally reveals information that ordinary message indistinguishability cannot hide. APP additionally adopts challenge freshness with respect to prior target-owner encryption so that this property depends only on the standard PRF and IND-CPA assumptions. It does not claim repeated-message privacy in the presence of other ciphertexts carrying the same private label.

\subsection{Unauthorized Equality Hiding}

The leakage function \(\mathcal L_{\rm eq}\) contains public contexts and lengths and, for permitted nonchallenge transcripts, the complete request graph, every released positive and negative bit, and the transitive closure of positive edges. Negative bits are retained because they eliminate candidate relations even though they create no equality edge. No released edge in this experiment is incident to a challenge component.

In \(\mathsf{Exp}^{\mathsf{UEH}}\), the adversary selects an honest target owner, one context, a polynomial \(n\), a public length \(L\), and two partitions \(\mathcal P_0,\mathcal P_1\) of \([n]\) with identical public metadata. It also selects a samplable distribution \(\mathcal D_L\). The entropy condition is defined relative to challenge-independent auxiliary information, the ideal leakage \(\mathcal L_{\rm eq}\), and previous failed message candidates---not relative to the real challenge ciphertexts whose security is being proved. Before every challenge sample or candidate guess, every message has conditional probability at most \(2^{-h}\) given those values. Let \(b_{\max}=\max\{|\mathcal P_0|,|\mathcal P_1|\}\). For hidden \(\beta\), the challenger independently samples one message for every block of \(\mathcal P_\beta\). If two blocks receive the same message, event \(\mathsf{BadMsg}\) occurs and the game aborts; otherwise the challenger independently encrypts each block's message at every position in that block, thereby realizing exactly \(\mathcal P_\beta\). The collision event satisfies
\[
\Pr[\mathsf{BadMsg}]\leq \frac{b_{\max}(b_{\max}-1)}{2}\,2^{-h}.
\]

Let \(q_g\) bound all distinct messages submitted, before or after the challenge, to any target-owner interface that evaluates the label PRF, including \textsf{Encrypt} and \textsf{Authorize}. If a hidden message equals a prechallenge candidate, or a later candidate equals a hidden message or a canonical encoding-equivalent value, event \(\mathsf{Hit}\) occurs and the interface aborts before releasing an object or result. Thus
\[
\Pr[\mathsf{Hit}]\leq b_{\max}q_g2^{-h}\leq nq_g2^{-h}.
\]
No challenge equality component, nor a field-for-field copy of one, may appear in \textsf{Authorize}, \textsf{Request}, or \textsf{TestProve}; equivalently, the permitted transcript graph is disjoint from the challenge positions. The target owner and its AC remain uncorrupted, and both candidate partitions must remain consistent with the same background leakage \(\mathcal L_{\rm eq}\). The adversary wins by guessing \(\beta\). Any attempt to admit a modified challenge object must first pass the target owner's signature and collision-resistant reference checks and is accounted for in the theorem below.

Let \(q_\ell\) bound the distinct target-owner labels generated outside the challenge. Let \(\epsilon_{\rm coll}\) collect the statistical probabilities of a repeated \(2\lambda\)-bit label, a collision among independently sampled hash-to-group bases, and the conditioning of the \(n\) public scalars to be pairwise distinct. For polynomial \(n,q_\ell\), it is bounded by
\[
\epsilon_{\rm coll}
\leq \frac{(q_\ell+n)^2}{2^{2\lambda+1}}
   +\frac{(q_\ell+n)^2}{2(q-1)}
   +\frac{n(n-1)}{2(q-2)}
   +\negl(\lambda).
\]
The distinctness conditioning applies to the \(n\) challenge scalars. If a challenge scalar happens to equal a nonchallenge scalar, the resulting public comparison concerns a hidden message and a nonchallenge message. On the complement of \(\mathsf{Hit}\) it is a fixed negative relation, independent of \(\beta\), except for the label and hash collisions already included in \(\epsilon_{\rm coll}\).

\begin{theorem}[Unauthorized equality hiding]
Under the vector MBDDH assumption,
\[
\begin{aligned}
\Adv_{\mathcal A}^{\mathsf{UEH}}
&\leq N_F\Adv_F^{\mathsf{PRF}}
 +nN_E\Adv_{\mathsf{PKE}}^{\mathsf{IND\text{-}CPA}}\\
&\quad+N_c\Adv^{\mathsf{VMBDDH}}(n)
 +N_c\Adv_{F_t}^{\mathsf{PRF}}\\
&\quad+\frac{q_t(q_t-1)}{2(q-1)}\\
&\quad+nq_g2^{-h}
 +\frac{n(n-1)}{2}2^{-h}
 +\frac{nq_0}{2^{2\lambda}}\\
&\quad+N_o\Adv_{\mathsf{SIG}}^{\mathsf{EUF\text{-}CMA}}
 +\Adv_{H_{\rm ref}}^{\mathsf{CR}}
 +\epsilon_{\rm coll}+\negl(\lambda).
\end{aligned}
\]
\end{theorem}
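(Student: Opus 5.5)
The proof is a sequence of game hops, starting from \(\mathsf{Exp}^{\mathsf{UEH}}\) with hidden \(\beta\). Each hop is charged to one term of the bound. At the end the challenge equality components are distributed as a vector MBDDH instance \(\{V_i(\mathcal P_\beta)\}\), and the \(C_0\) fields carry no information about \(\beta\).

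\emph{Hops~1--2: statistical events and authenticity.} First, abort on \(\mathsf{BadMsg}\) and \(\mathsf{Hit}\), costing \(\frac{n(n-1)}{2}2^{-h}\) and \(nq_g2^{-h}\). On the complement, block messages are pairwise distinct and never enter a label-evaluating interface. Second, guess the target owner signing key (\(N_o\) factor) and invoke EUF-CMA. Also invoke collision resistance of \(H_{\rm ref}\). Together these show that no admitted request references a challenge component or a modified copy of one. Hence the permitted transcript graph stays disjoint from the challenge positions, and \(\mathcal L_{\rm eq}\) is identical under both partitions by the game's consistency condition.

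\emph{Hops~3--4: idealizing labels and the proof randomness.} Third, guess the label-key version (\(N_F\)) and replace \(F\) by a random function. The block labels are then fresh independent \(2\lambda\)-bit strings. Abort if any was queried to \(H_0\) (\(nq_0/2^{2\lambda}\)) or collides with another label or base (\(\epsilon_{\rm coll}\)). After this, each block base \(X_B=H_0(\cdot)\) is an independent uniform nonidentity element, exactly as in the assumption. \(Y=H_{\rm ctx}(\Gamma)\) is a random-oracle point, and \(pk_c=(g_1^{\alpha_o},g_2^{\alpha_o})\) is the assumption's \(pk\). Fourth, guess the controller (\(N_c\)) and replace \(F_t\) by a random function, so each distinct request receives an independent \(t\). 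Abort on a repeated \(t\) among \(q_t\) requests, which gives \(q_t(q_t-1)/(2(q-1))\).

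\emph{Hop~5: simulating the controller without \(\alpha_o\) (main obstacle).} Next I build the reduction \(\mathcal B\) to vector MBDDH. \(\mathcal B\) receives \((pk,Y)\) and programs \(H_{\rm ctx}(\Gamma)=Y\). It forwards \(\mathcal P_0,\mathcal P_1\) and places the returned \(V_i\) as \((c_1,c_2,c_3,c_4)\). It simulates all nonchallenge objects itself: it chooses each nonchallenge \(r\) and programs \(H_0\) on nonchallenge labels as \(g_1^{x}\) with known \(x\). The hard step is answering \textsf{TestProve} without \(\alpha_o\). Since admitted requests only involve nonchallenge components produced by \(\mathcal B\), it knows \(r,r'\), and hence \(E=g_2^r\) and \(E'=g_2^{r'}\), directly. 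It also knows the true bit from the labels, and can sign with a self-generated \(sk_c^{\mathsf{pf}}\). Adversary-supplied, owner-signed objects are ruled out by Hop~2.

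Two further points need care in this hop. The sampling condition \(a+k_i\neq0\) and the distinctness of the \(k_i\) differ only by the \(n(n-1)/(2(q-2))\) part of \(\epsilon_{\rm coll}\). A challenge scalar that coincides with a nonchallenge scalar yields only a fixed negative relation off \(\mathsf{Hit}\), as the paper notes. With these handled, the simulation is exact, and distinguishing \(\beta\) on the equality components costs \(N_c\Adv^{\mathsf{VMBDDH}}(n)\).

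\emph{Hop~6: the ordinary ciphertexts.} Finally, run \(n\) IND-CPA hybrids (with an \(N_E\) guess) that replace each challenge \(C_0\) by an encryption of a fixed dummy of length \(L\). The game is then independent of \(\beta\). Summing all hops, plus a \(\negl(\lambda)\) term for the signature simulation, gives the stated bound.
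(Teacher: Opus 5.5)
Your hop structure matches the paper's. Your explicit simulation of \textsf{TestProve} from the known nonchallenge exponents \(r,r'\) is sound; the paper only asserts that challenge components never reach the proof interface.

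The step that fails as written is how you embed the assumption's base: your reduction ``programs \(H_{\rm ctx}(\Gamma)=Y\)''. In \(\mathsf{Exp}^{\mathsf{UEH}}\) the adversary selects the target context adaptively. Before doing so it may query \(H_{\rm ctx}\) on \(\Enc(\mathsf{ctxtag};\Gamma)\) and obtain target-owner ciphertexts under \(\Gamma\), each of which needs \(c_4=H_{\rm ctx}(\cdot)^r\). Your reduction receives \((pk,Y)\) in the first stage of the vector MBDDH experiment, before \(\Gamma\) is known. By the time \(\Gamma\) is fixed, the oracle value at that point is generally already some other element. The returned \(Y^{r_i}\), with \(r_i\) unknown, cannot then be converted into \(H_{\rm ctx}(\Gamma)^{r_i}\). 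Guessing the context would cost a factor equal to the number of context queries, and no such factor appears in the bound. The missing idea, which the paper uses, is to answer every fresh context query \(x\) with \(H_{\rm ctx}(x)=Y^{\rho_x}\) for a fresh \(\rho_x\leftarrow\Zqs\), and to output \(c_{4,i}=(Y^{r_i})^{\rho_\Gamma}\). This has exactly the random-oracle distribution, leaves \(c_{1,i}\) and \(c_{3,i}\) unchanged, and keeps the context equation satisfied. It also still lets you build nonchallenge components as \(Y^{\rho_x r}\) with your own \(r\).

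There is also an accounting problem: you charge \(nq_g2^{-h}\) for \(\mathsf{Hit}\) in Hop~1, that is, in the real game. This is justified for prechallenge candidates. However, the min-entropy condition is stated relative to challenge-independent auxiliary data, \(\mathcal L_{\rm eq}\), and failed candidates, and explicitly not relative to the real challenge ciphertexts. A postchallenge candidate may depend on the real \(C_0\) fields and equality components, so its hit probability is not bounded at that point. The paper keeps \(\mathsf{Hit}\) as an abort that each reduction can detect, since each samples the hidden messages itself. It applies the entropy bound only in the final hybrid, where the challenge view depends on the messages only through \(\mathcal L_{\rm eq}\). Relatedly, the IND-CPA replacement of the challenge \(C_0\) fields should come before the vector MBDDH step. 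A reduction that does not know the challenger's bit cannot produce \(C_0\) fields consistent with the partition underlying the returned vector, so your Hop~5 presupposes that Hop~6 has already been done.
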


\begin{proof}
First guess the target owner's signing key and condition on the absence of a fresh valid target-owner signature; failure of this event yields the displayed EUF-CMA term. Collision resistance of \(H_{\rm ref}\) similarly rules out moving an authenticated ciphertext or query behind an unchanged signed reference. Guess the target label-key version and replace its label PRF by a random function. Abort on \(\mathsf{BadMsg}\) or the collision events collected in \(\epsilon_{\rm coll}\). The reduction records all prechallenge \(H_0\) queries and their answers, then assigns one independent uniform label \(\mu_B\) to every challenge block and defines
\[
x_B=\Enc(\mathsf{hlabel};\Gamma,\mu_B).
\]
Before constructing the challenge, it aborts if a recorded query equals some \(x_B\). After the challenge, it recognizes the same event and aborts before answering the offending query. Denote either case by \(\mathsf{BadH0}\). Because the labels are uniform and hidden,
\[
\Pr[\mathsf{BadH0}]\leq b_{\max}q_0/2^{2\lambda}
\leq nq_0/2^{2\lambda}.
\]
Conditioned on its complement, these random-oracle points remain unqueried and need not be assigned: the reduction directly uses the vector-MBDDH values \(X_B^{r_i}\) as the challenge \(c_{1,i}\). It never needs the hidden bases \(X_B\).

To support an adaptively selected context from the assumption's public base \(Y^\star\), the reduction answers every new context query \(x\) with \(H_{\rm ctx}(x)=(Y^\star)^{\rho_x}\) for fresh \(\rho_x\leftarrow\Zqs\). If \(Y^{\star r_i}\) is the fourth component of the vector challenge, it uses \((Y^{\star r_i})^{\rho_{\Gamma}}\) as \(c_{4,i}\); this has the exact random-oracle distribution even when the target context was queried before its selection.

Guessing the target encryption key, replace the \(n\) conventional PKE components, one at a time, by encryptions of a fixed \(L\)-byte string. The owner signs the complete challenge ciphertexts exactly as in the real experiment.

Next guess the target controller and replace its \(F_t\) by a random function. Condition on collision resistance of \(H_{\rm ref}\) and on no PRF-output collision. Different request bytes then have different PRF inputs and independent nonzero proof exponents, whereas an exact replay has the same input and response exponent. Every permitted proof transcript concerns an explicitly released edge already contained in \(\mathcal L_{\rm eq}\). Challenge components never enter the proof interface. Controller instances of corrupted or non-target owners have independent secrets and can be simulated directly without the target \(\alpha_o\).

After these changes, the challenge equality components are exactly the vector MBDDH distribution for \(\mathcal P_\beta\), up to the scalar-conditioning term in \(\epsilon_{\rm coll}\). Guessing the target owner-bound controller accounts for the factor \(N_c\), and vector MBDDH hides \(\beta\). In the resulting ideal hybrid, the cryptographic challenge view depends on the hidden messages only through \(\mathcal L_{\rm eq}\). The stipulated conditional min-entropy therefore bounds all prechallenge and postchallenge message hits by \(nq_g2^{-h}\). Summing the hybrid transitions and the explicitly bounded bad events gives the result.
\end{proof}

The theorem is conditional on the stated nonstandard vector assumption; it is not a reduction to DDH, SXDH, or co-CDH. A two-ciphertext experiment would not justify the database-level partition claim. The guarantee applies only while every challenge component remains outside the authorization and test interfaces. Once an edge incident to such a component is released, this UEH experiment makes no claim about its remaining relations.

\subsection{Authorization Unforgeability and Request Binding}

In \(\mathsf{Exp}^{\mathsf{AURB}}\), the adversary may obtain signed ciphertexts and authorizations from honest owners and signed requests from honest users. It wins if AC accepts either (i) an authorization statement never signed by the indicated honest owner, or (ii) an issued authorization under a different ciphertext, query, context, suite, registered identity, or user key. Replaying the exact issued request is not a winning event.

\begin{lemma}[Signed-object authenticity]
For an injective typed encoding and an authenticated pinned-key directory, acceptance of a fresh statement under an honest signing key yields an EUF-CMA forgery. Acceptance of unchanged bytes under a different identity or role contradicts directory binding.
\end{lemma}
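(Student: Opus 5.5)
The proof has two parts that mirror the lemma's two sentences. One is a standard EUF-CMA reduction for fresh statements. The other is a directory-binding argument for unchanged bytes presented under another identity or role.

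\emph{Fresh statements.} Let \(\mathcal A\) cause some verifier (the AC in \textsf{TestProve}, or \textsf{Verify}) to accept a pair \((M,\sigma)\) under an honest key \(vk\). Here \(M\) was never signed by the holder of \(vk\), and \(vk\) is one of the \(N_o+N_u+N_c\) honest owner, user, or controller keys. I would build a forger \(\mathcal B\) that receives a challenge verification key \(vk^\star\) from the EUF-CMA challenger. \(\mathcal B\) guesses which honest key slot it replaces, which costs the factor \(N_o\), \(N_u\), or \(N_c\) matching the target-key guessing convention stated at the start of Section~\ref{sec:security}. \(\mathcal B\) registers \(vk^\star\) in that slot and generates every other key itself. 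It answers each signing request of the simulated honest party (\textsf{Encrypt}, \textsf{Authorize}, \textsf{Request}, or \textsf{TestProve}) by querying its signing oracle on the canonical byte string \(M_C\), \(M_\tau\), \(M_R\), or \(M_\Pi\). All other secrets are in \(\mathcal B\)'s hands, so the simulation is perfect.

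The crux is defining freshness at the byte level. Every verifier decodes canonically before checking a signature and rejects non-minimal or duplicate encodings and trailing bytes. \(\Enc\) is injective and begins with \(\mathsf{ver}\), \(\mathsf{sid}\), and a distinct tag \(\mathsf{dst}\) for each object type. It follows that a statement that differs in any field, or in its object type, is a byte string distinct from every oracle query. In particular, a signed ciphertext cannot be reinterpreted as an authorization or a request. So when \(\mathcal B\)'s guess is correct, the accepted pair \((M,\sigma)\) is a valid EUF-CMA forgery.

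\emph{Unchanged bytes under another identity or role.} Suppose \(\mathcal A\) resubmits an honestly signed \((M,\sigma)\) but claims a different signer. Every signature-bearing encoding names its signer inside \(M\): \(\Gamma\) carries \(\mathsf{oid}\) and \(\mathsf{kid}_o\), \(\tau\) carries \(\mathsf{uid}\), \(\mathsf{kid}_u\), and \(vk_u\), and \(\Gamma\) binds \(\mathsf{kid}_c\) through \(h_c\). The verifier resolves the key from these embedded identifiers through the authenticated directory, and for requests it additionally requires equality with the \(vk_u\) inside \(\tau\). If the same bytes verify under a different claimed identity or role, the directory must have returned either a key bound to another principal or role, or two keys under one pinned identifier. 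Either outcome contradicts the non-equivocating, role-bound directory assumption of the threat model. I would state this case as a direct contradiction, with no reduction.

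The main obstacle is checking that the verification path resolves keys only from fields inside the signed bytes and never from unauthenticated routing data. For the request signature \(\sigma_R\), the key is selected through the token-bound \((\mathsf{uid},\mathsf{kid}_u)\) inside \(\tau\), which is itself signed by the owner. I would therefore order the argument as follows: first establish the authenticity of \(\tau\) by the owner-key case, and only then apply the user-key case to \(\sigma_R\), so the two cases do not depend on each other circularly.
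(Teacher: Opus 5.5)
Your proposal is correct and is essentially the argument the paper intends: the paper states this lemma without a separate proof, and its AURB and PVS proofs rely on exactly this guess-the-honest-key EUF-CMA reduction over injectively and canonically typed bytes, together with a direct appeal to the ideal role-bound directory when the signed bytes are unchanged. The ordering you impose (authenticate \(\tau\) first, then \(\sigma_R\)) is harmless but unnecessary: acceptance of a never-signed \(M_R\) under the directory-resolved key of an honest user is already a forgery regardless of who signed \(\tau\), and this independence is what handles the case where \(\tau\) comes from a corrupted non-target owner.
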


\begin{theorem}[AURB]
If \(\mathsf{SIG}\) is EUF-CMA secure and \(H_{\rm ref}\) is collision resistant, then
\[
\Adv_{\mathcal A}^{\mathsf{AURB}}
\leq (N_o+N_u)\Adv_{\mathsf{SIG}}^{\mathsf{EUF\text{-}CMA}}
 +3\Adv_{H_{\rm ref}}^{\mathsf{CR}}.
\]
\end{theorem}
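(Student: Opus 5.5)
The proof is a short case analysis on the winning event of \(\mathsf{Exp}^{\mathsf{AURB}}\). Each case is reduced to either an EUF-CMA forgery under an honest owner or user key, or a collision in \(H_{\rm ref}\). Signed-object authenticity (the lemma above) handles all signature-based cases. The work lies in showing that every field whose modification counts as a win is covered, directly or through a reference, by a signature that \textsf{TestProve} checks.

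\textbf{Step 1: where each field is bound.} From the admission rules of \textsf{TestProve}, the binding is as follows.
\begin{itemize}
\item The context \(\Gamma\), and hence \(\mathsf{ver},\mathsf{sid},\mathsf{oid},\mathsf{kid}_o,\mathsf{kid}_F,\mathsf{kid}_c,h_c\), appears inside \(M_\tau\), \(M_C\) and \(M_R\).
\item The ciphertext is bound via \(h_C\) in \(M_\tau\) and via \(\sigma_C\).
\item The query is bound only via \(h_Q\) in \(M_\tau\).
\item The user identity \((\mathsf{uid},\mathsf{kid}_u,vk_u)\) appears in \(M_\tau\). The directory must resolve it to the key under which \(\sigma_R\) verifies.
\end{itemize}
Injectivity of \(\Enc\) and its domain-separation tags ensure that two differently typed objects never share bytes. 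A valid signature on one message type therefore cannot be reused as a signature on another.

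\textbf{Step 2: case (i).} Suppose the AC accepts \(M_\tau\) under an honest owner key, but the owner never signed \(M_\tau\). The reduction guesses the target owner key, losing a factor \(N_o\), and embeds the EUF-CMA challenge key. It simulates all other owner signatures itself and answers the target owner's signing requests through the signing oracle. The accepted \((M_\tau,\sigma_\tau)\) is then a fresh forgery. Case (i) also covers a winning request that uses an honest user's identity with a request message the user never signed. There the guess is over user keys, losing a factor \(N_u\), and the accepted \((M_R,\sigma_R)\) is a forgery. Together these give the \((N_o+N_u)\) term.

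\textbf{Step 3: case (ii).} Here the adversary reuses an issued \(\tau\), so the bytes of \(M_\tau\) are unchanged. A changed context, suite, identity or user key would change \(M_\tau\) itself. Because \(\Gamma\) and the user fields are checked by equality against \(M_\tau\) and against the directory, such a change either falls back to case (i) or contradicts directory binding. The remaining possibility is that \(M_\tau\) is unchanged but the accepted \(C^*\neq C\) or \(Q^*\neq Q\). Since the AC recomputes the references, this forces \(H_{\rm ref}(\Enc(\mathsf{ctref};C^*))=H_{\rm ref}(\Enc(\mathsf{ctref};C))\) or the analogous equation for \(Q\), which is a collision.

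A third collision term covers the controller reference \(h_c\) inside \(\Gamma\). If the AC accepted a different controller tuple \((\mathsf{oid},\mathsf{kid}_c,pk_c,vk_c^{\mathsf{pf}})\) with the same \(h_c\), that would also be a collision. This accounts for the factor \(3\) on \(\Adv_{H_{\rm ref}}^{\mathsf{CR}}\). I would not use the request reference \(h_R\), since it matters only to the proof exponent and not to admission. Finally, replaying the exact issued request is excluded by definition, so every win lands in one of these events. A union bound then gives the stated inequality.

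\textbf{Main obstacle.} The delicate step is showing that case (ii) covers every variation counted as a win. This includes subtle ones, such as a different user whose directory key happens to equal \(vk_u\), or an encoding-equivalent \(\Gamma\). For these I would rely on canonical decoding, which rejects non-canonical encodings outright, and on the pinned, non-equivocating directory, which maps \((\mathsf{uid},\mathsf{kid}_u)\) to a unique key record. I would check this carefully field by field against the AC admission list.
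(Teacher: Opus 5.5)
Your proposal is correct and follows essentially the same route as the paper's proof. Owner and user signature forgeries give the \((N_o+N_u)\) EUF-CMA term, collisions in the three admission-relevant references \(h_C\), \(h_Q\), and \(h_c\) account for the factor \(3\) on \(\Adv_{H_{\rm ref}}^{\mathsf{CR}}\), and rebinding the identity or key under unchanged signed bytes is excluded by the authenticated directory, exactly as in the paper's case analysis.
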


\begin{proof}
AC accepts only after verifying the owner signature on \(C\), the owner signature on \(\tau\), and the registered user's signature on the complete request under the token-bound directory key. Changing a signed field, including \(\mathsf{kid}_F\), yields a signature forgery. Keeping the token bytes while replacing \(C\) or \(Q\) requires a collision in its complete-object reference. Substituting an owner-bound controller record while retaining the signed context requires a collision in \(h_c\) or contradicts the sealed directory binding. Keeping all signed bytes while changing the admitted identity or key contradicts authenticated directory resolution. These cases exhaust acceptance of a fresh or rebound request.
\end{proof}

Authorization policy is enforced by the owner-isolated AC as a reference monitor; the cryptographic guarantee is that an admitted authorization is bound to its exact signed objects, context, and registered user key. The result is neither a claim that a compromised AC lacks testing power nor social non-transferability. A user that exports \(sk_u^{\mathsf{sig}}\) enables another party to submit requests as that user; this is ordinary credential compromise and is outside the game.

\subsection{Public-Verification Soundness}

In \(\mathsf{Exp}^{\mathsf{PVS}}\), the adversary controls SW and may modify any public transcript. To capture a faulty result generator without assuming correct reporting by AC, it may choose arbitrary validated \((b,\pi_1,\ldots,\pi_4)\) for an authenticated honest-owner request and obtain the owner-bound controller signature on that candidate response. It receives no protected intermediate value. The adversary wins if \textsf{Verify} accepts but the bit differs from the equality relation defined by \textsf{TestProve} for the authenticated components, or if a response is accepted under request bytes other than those signed by the controller. Returning no response is not a win.

\begin{theorem}[Public-verification soundness]
For validated prime-order source-group elements,
\[
\Adv_{\mathcal A}^{\mathsf{PVS}}
\leq (N_o+N_u+N_c)\Adv_{\mathsf{SIG}}^{\mathsf{EUF\text{-}CMA}}
 +4\Adv_{H_{\rm ref}}^{\mathsf{CR}}.
\]
\end{theorem}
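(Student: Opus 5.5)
The argument has two parts: an authentication part that pins down which objects the verifier is actually checking, and an algebraic part showing that the five pairing equations force the reported bit. First, condition on the event that every signature accepted by \textsf{Verify} comes from a signing query made to an honest party. Using the signed-object authenticity lemma and guessing the relevant key gives the \((N_o+N_u+N_c)\Adv_{\mathsf{SIG}}^{\mathsf{EUF\text{-}CMA}}\) term; the controller signature on \(\Pi\) is covered by the \(N_c\) factor. Next, condition on the absence of \(H_{\rm ref}\) collisions in four references: \(h_C\), \(h_Q\), \(h_R\) and \(h_c\). This gives the \(4\Adv_{H_{\rm ref}}^{\mathsf{CR}}\) term. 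Under these conditions the accepted \((C,Q,\tau,R)\) are exactly the authenticated objects, the pinned keys \(pk_c,vk_c^{\mathsf{pf}}\) are those of the target controller, and \(\Pi\) is bound to the request bytes it names. This already rules out the second winning condition, acceptance under different request bytes, since \(h_R\) inside the signed \(M_\Pi\) determines \(R\).

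The remaining case is a response whose \((b,\pi_1,\ldots,\pi_4)\) the adversary chose freely and obtained signed. Here I argue purely algebraically, with no computational assumption. Write \(\pi_4=g_1^{t}\) with \(t\neq0\) because \(\pi_4\) is checked to be non-identity. The first equation \(e(\pi_4,g_2)=e(g_1,\pi_3)\) forces \(\pi_3=g_2^{t}\). Let \(a=\alpha_o+c_2\), which is nonzero since \(pk_{c,2}g_2^{c_2}\neq1\), and write \(c_3=g_2^{z}\). The equation \(e(pk_{c,1}g_1^{c_2},\pi_1)=e(\pi_4,c_3)\) gives \(\pi_1=g_2^{tz/a}=E^{t}\), where \(E=c_3^{1/a}\) is the value \textsf{TestProve} would compute. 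The same computation for the query gives \(\pi_2=(E')^{t}\). This uses the pairing non-degeneracy and prime order, and is where ``validated prime-order elements'' enters. The context equations \(e(c_4,\pi_3)=e(Y,\pi_1)\) and \(e(q_4,\pi_3)=e(Y,\pi_2)\) then reduce, after cancelling \(t\), to the controller's context checks \(e(c_4,g_2)=e(Y,E)\) and \(e(q_4,g_2)=e(Y,E')\). So an accepted response exists only for requests whose components the controller would also admit. Finally, \(\widehat b=\mathbf 1[e(c_1,\pi_2)=e(q_1,\pi_1)]=\mathbf 1[e(c_1,E')^t=e(q_1,E)^t]\), which equals the bit \textsf{TestProve} defines because exponentiation by \(t\neq0\) is a bijection on \(G_T\). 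Since \textsf{Verify} accepts only if \(b=\widehat b\), an accepted bit always equals the true relation.

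The main obstacle is the first step: making sure the unmasked values \(E,E'\) appearing in the algebraic part are those of the \emph{honest} controller key \(\alpha_o\) for the pinned context, rather than values from an adversarially substituted \(pk_c\). Acceptance forces \(pk_{c,1}\) and \(pk_{c,2}\) to share a single exponent \(\alpha_o\) only through the directory check \(e(pk_{c,1},g_2)=e(g_1,pk_{c,2})\) together with the \(h_c\) binding inside the signed context. I would handle this by invoking directory non-equivocation and charging any substitution behind an unchanged \(h_c\) to the fourth collision-resistance term. Once \(pk_c\) is fixed, the exponent computations above are forced, and the bound is the sum of the signature and collision terms with no statistical loss.
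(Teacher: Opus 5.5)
Your proposal is correct and follows essentially the same route as the paper's proof. You condition on signature unforgeability (the $(N_o+N_u+N_c)$ term) and on collision-freeness of the four references $h_C,h_Q,h_R,h_c$, then use $\pi_4=g_1^t$ with $t\neq0$ to force $\pi_3=g_2^t$, $\pi_1=E^t$, $\pi_2=(E')^t$ and cancel $t$ in the context and final equations; your discussion of pinning $pk_c$ through directory non-equivocation and the $h_c$ binding is a more explicit version of the paper's conditioning on the sealed controller binding and the ideal directory.
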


\begin{proof}
First condition on authentic owner/user statements, collision-free references, the sealed owner-domain controller binding, and the ideal directory. A response outside the permitted faulty-generator query also requires an authentic controller signature. For any candidate response, write \(\pi_4=g_1^t\). Its non-identity check gives \(t\neq0\), and the cross-group equation forces \(\pi_3=g_2^t\). Write \(c_3=g_2^z\) and \(\pi_1=g_2^a\). Because \(pk_{c,1}g_1^{c_2}=g_1^{\alpha_o+c_2}\) is nonidentity, the first integrity equation gives
\[
(\alpha_o+c_2)a=tz\pmod q,
\]
and hence \(\pi_1=(c_3^{1/(\alpha_o+c_2)})^t=E^t\). The second integrity equation similarly gives \(\pi_2=(E')^t\). The two context equations further imply
\[
e(c_4,g_2)=e(Y,E),\qquad e(q_4,g_2)=e(Y,E'),
\]
because exponentiation by \(t\neq0\) is injective in \(G_T\). Thus the response can verify only for context-consistent components. Finally,
\[
e(c_1,\pi_2)=e(q_1,\pi_1)
\quad\Longleftrightarrow\quad
e(c_1,E')=e(q_1,E),
\]
again because \(t\neq0\). Therefore an accepted bit is exactly the result defined by \textsf{TestProve} for the two authenticated equality components.

If any conditioning event fails, the adversary has forged one of the required signatures or found a collision in a ciphertext, query, request, or controller-key reference. The union bound gives the claimed result.
\end{proof}

\subsection{Security Boundary of the Guarantees}

The four theorems establish, in separate games, challenge-fresh message privacy, equality-partition hiding for a completely untested challenge set, exact authorization and registered-key binding, and sound public verification in one honest-owner domain. They do not constitute a single concurrent game that grants every strengthened oracle from all four experiments. In particular, UEH does not cover a challenge component after it participates in an authorized test. The guarantees do not imply cross-owner comparison, direct comparison between arbitrary stored ciphertexts, query privacy from the owner, availability, one-time use, revocation, malicious-owner semantic correctness, resistance to endpoint credential sharing, security after AC or label-key compromise, forward security for label rotation, or completeness of an application-level result set. Public verification holds a malicious worker and result generator to the authenticated algebraic relation; it does not attest that a malicious owner encoded an external natural-language statement faithfully.

\section{Performance Evaluation}
\label{sec:performance}
\label{sec:evaluation}

\subsection{Implementation and Experimental Setup}
\label{sec:implementation}

We implemented the cryptographic primitive layer in C++17 and compiled it with
GCC~11.4.0. Experiments ran on a 64-bit Ubuntu~20.04 host
with an AMD Ryzen Threadripper~3970X processor (32 physical cores and 64
hardware threads) and 251~GiB of memory. All primitive benchmarks were
executed on a single CPU thread.

The approximately 128-bit-secure Type-3 groups were instantiated with
BLS12-381 in mcl at commit \texttt{57b2a4a827d7}. Group elements use the
canonical compressed representation: 48 bytes in \(G_1\), 96 bytes in
\(G_2\), and 32 bytes for a scalar. The two protocol hash-to-\(G_1\)
functions use distinct domain-separation tags with the BLS12-381 \(G_1\)
random-oracle suite in RFC~9380~\cite{RFC9380}. SHA-256, HMAC-SHA-256, Ed25519, and
\texttt{crypto\_box\_seal} were provided by libsodium~1.0.18. The sealed-box
primitive, instantiated by X25519 and XSalsa20-Poly1305, is used only for the
ordinary public-key-encryption component \(C_0\); it is unrelated to sealing
the controller's long-term state.

The algebraic and hash benchmarks cycled through a pool of 4,096 independently
generated valid inputs; conventional primitives used the valid fixed-size
messages and keys stated in Table~\ref{tab:primitive-benchmarks}. Each
operation was measured in ten rounds of approximately 200~ms. The table reports the
median and 95th percentile across the ten per-round mean costs.

\subsection{Primitive Benchmarks and Cost Model}
\label{sec:primitive-benchmarks}

Table~\ref{tab:primitive-benchmarks} includes not only algebraic operations but
also the decoding, validation, signatures, and conventional encryption exercised
by the protocol. For an algorithm \(A\), let \(n_{A,j}\) be
its number of invocations of primitive \(j\), and let \(\widetilde t_j\) be the
measured median in the table. We use
\begin{equation}
  \widehat T_A=\sum_j n_{A,j}\widetilde t_j
  \label{eq:operation-cost-model}
\end{equation}
only to explain its cost composition and to derive clearly marked estimates
for comparison schemes without compatible implementations. A two-pair equation
uses the measured product with one shared final exponentiation, rather than two
independent pairing costs. Protocol-level conclusions use direct measurements of
the complete algorithms, for which parsing,
validation, serialization, and cross-operation reuse remain enabled.

\subsection{Protocol-Level Computation Cost}
\label{sec:protocol-computation}

Table~\ref{tab:protocol-direct} reports complete-algorithm measurements for
AVPKEET. The timed paths include canonical group decoding and validation,
object encoding, reference hashing, signature processing, and all
pairing equations required by the corresponding algorithm. Every benchmark run
used valid pre-generated input and first passed an end-to-end self-test from
encryption through public verification and decryption. The results show that the
two proof-related operations remain short: the isolated controller produces an
authenticated result in 5.880~ms, and an independent verifier checks the complete
request and proof in 8.970~ms.

\begin{table}[!p]
\centering
\caption{Measured primitive costs on the evaluation host (microseconds)}
\label{tab:primitive-benchmarks}
\scriptsize
\renewcommand{\arraystretch}{0.90}
\setlength{\tabcolsep}{3.2pt}
\begin{tabular}{@{}>{\raggedright\arraybackslash}p{0.35\textwidth}
                    >{\raggedleft\arraybackslash}p{0.085\textwidth}
                    >{\raggedleft\arraybackslash}p{0.085\textwidth}
                    >{\raggedright\arraybackslash}p{0.38\textwidth}@{}}
\toprule
\textbf{Primitive} & \textbf{Median} & \textbf{P95} & \textbf{Benchmark setting} \\
\midrule
Scalar-field addition & 0.018 & 0.018 & BLS12-381 scalar field \\
Scalar-field multiplication & 0.032 & 0.032 & BLS12-381 scalar field \\
Scalar-field inversion & 2.185 & 2.215 & Nonzero scalar \\
Random nonzero scalar & 0.129 & 0.129 & Operating-system CSPRNG \\
\midrule
\(G_1\) scalar multiplication, generator & 86.441 & 86.509 & Reused protocol generator \\
\(G_1\) scalar multiplication, variable base & 87.706 & 87.746 & Independently generated base \\
\(G_2\) scalar multiplication, generator & 137.100 & 138.632 & Reused protocol generator \\
\(G_2\) scalar multiplication, variable base & 139.145 & 139.348 & Independently generated base \\
\(G_1\) two-scalar multiplication & 125.561 & 125.653 & Two independently generated bases \\
\(G_2\) two-scalar multiplication & 225.654 & 229.508 & Two independently generated bases \\
Optimal-Ate pairing & 826.830 & 827.248 & One \(G_1\times G_2\) pairing \\
Two-pair product & 1060.107 & 1061.137 & Two Miller loops and one final exponentiation \\
Hash to \(G_1\) & 102.268 & 102.435 & RFC~9380 suite; 64-byte input \\
\midrule
SHA-256 & 0.454 & 0.458 & 64-byte input \\
HMAC-SHA-256 & 1.189 & 1.191 & 64-byte input \\
Ed25519 signing & 24.006 & 24.238 & 1,024-byte signed message \\
Ed25519 verification & 55.473 & 55.578 & 1,024-byte signed message \\
Sealed-box key generation & 33.908 & 34.341 & X25519 recipient key pair \\
Sealed-box encryption & 71.150 & 71.229 & 32-byte plaintext \\
Sealed-box decryption & 37.322 & 37.339 & 80-byte ciphertext \\
\midrule
\(G_1\) compression & 13.592 & 13.744 & 48-byte canonical output \\
\(G_1\) decoding and validation & 72.613 & 72.657 & Canonical and subgroup checks \\
\(G_2\) compression & 14.149 & 14.318 & 96-byte canonical output \\
\(G_2\) decoding and validation & 101.683 & 101.812 & Canonical and subgroup checks \\
\bottomrule
\end{tabular}

\vspace{7pt}
\caption{Direct AVPKEET protocol costs on the evaluation host}
\label{tab:protocol-direct}
\renewcommand{\arraystretch}{0.98}
\setlength{\tabcolsep}{3.2pt}
\begin{tabular}{@{}l l r r@{}}
\toprule
\textbf{Algorithm} & \textbf{Executing role} & \textbf{Median (ms)} & \textbf{P95 (ms)} \\
\midrule
\textsf{Encrypt}   & Data owner          & 0.804 & 0.804 \\
\textsf{Authorize} & Data owner          & 1.448 & 1.450 \\
\textsf{Request}   & Data user           & 1.541 & 1.543 \\
\textsf{TestProve} & Isolated controller & 5.880 & 5.920 \\
\textsf{Verify}    & Public verifier     & 8.970 & 9.061 \\
\textsf{Decrypt}   & Data owner          & 3.174 & 3.175 \\
\bottomrule
\end{tabular}
\end{table}

Figure~\ref{fig:protocol-comparison} places these measurements beside the
closest ciphertext-oriented modes of representative schemes. AVPKEET is measured
directly; the other bars execute algebraic kernels formed from the operation
counts in the respective papers using the same curve, library, host, and
measurement procedure. These kernels make the underlying cryptographic work
executable and comparable, but are not claims of full source-compatible
reimplementation. The comparison also preserves functional differences:
Xu et al.'s published \textsf{Test} count excludes its separately generated
designated-user proof~\cite{Xu2017VPKEET}, and a missing secret-free verification
algorithm is shown as N/A rather than as zero cost. Tseng et al.'s encryption is
likewise omitted because its cost and size depend on the selected strongly
unforgeable one-time signature~\cite{Tseng2025TightPKEETFA}.

\begin{figure}[!t]
  \centering
  \includegraphics[width=0.96\textwidth]{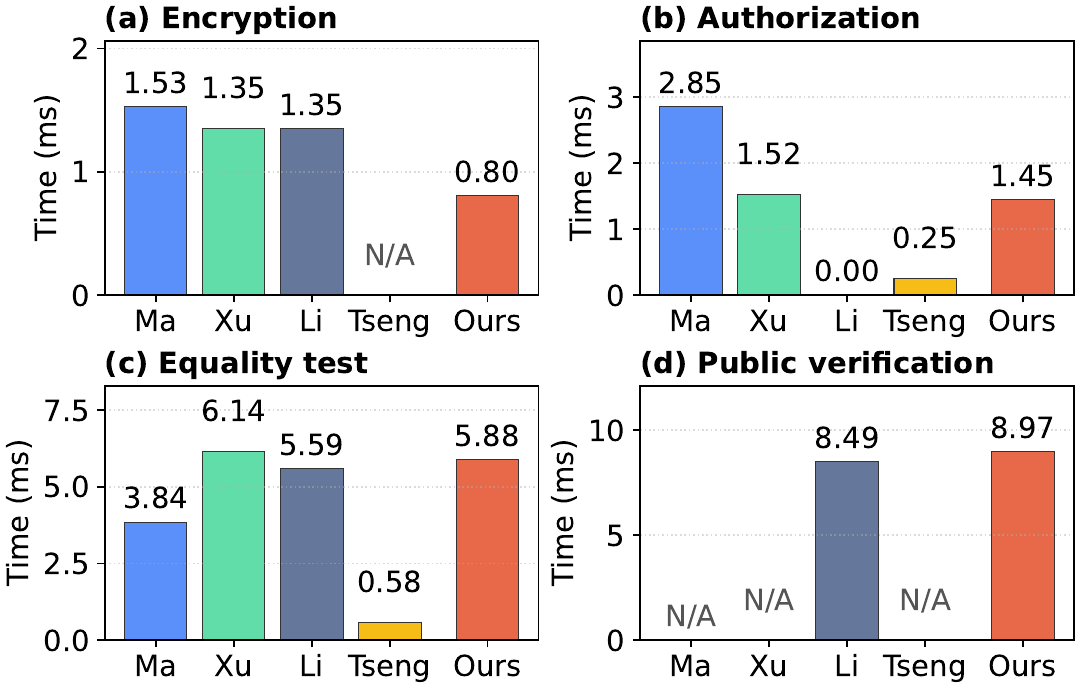}
  \caption{Protocol computation for the closest ciphertext-oriented modes of
  Ma et al.~\cite{Ma2015PKEETFA}, Xu et al.~\cite{Xu2017VPKEET}, Li et
  al.~\cite{Li2024PVPKEET}, Tseng et al.~\cite{Tseng2025TightPKEETFA}, and
  AVPKEET. AVPKEET values are complete-algorithm measurements; prior-work values
  measure executable same-host kernels constructed from published operation
  counts. N/A means that the selected mode does not provide the plotted
  algorithm. In particular, only Li et al. and AVPKEET provide secret-free public
  verification, for different verification statements.}
  \label{fig:protocol-comparison}
\end{figure}

AVPKEET's main cost lies in the pairing equations used to generate and verify the
public result, rather than in encryption or authorization. The complete timings
also include both signed operands, exact-request validation, and response
authentication. Each individual algorithm remains below 9.1~ms at the 95th
percentile on one CPU thread.

\subsection{Communication and Storage Cost}
\label{sec:communication-cost}

We use the canonical sizes from Section~\ref{sec:implementation} and 32-byte
hashes, 64-byte Ed25519 signatures, and an 80-byte sealed-box encryption of a
32-byte message. For concrete metadata accounting, every identity and key
identifier is 16 bytes and the version--suite field is 4 bytes, giving a
132-byte context. These sizes exclude transport framing and directory
certificates, which are deployment-specific and common to all requests.
Table~\ref{tab:avpkeet-wire} gives the resulting AVPKEET object sizes.

Figure~\ref{fig:communication-comparison} compares the objects needed by one
test in the same selected modes. Prior schemes are encoded with 48-byte source
group elements, 576-byte target-group elements, and 32-byte scalars and hashes;
the figure therefore normalizes their symbolic sizes to the same security-level
representation. For Tseng et al., it retains the 7,856-byte SPHINCS+ signature
used in that paper's own 128-bit comparison. The online-request bar includes all
ciphertexts and authorization material that the test algorithm consumes, while
the result bar includes the proof whenever the scheme defines one.

\begin{figure}[!t]
  \centering
  \includegraphics[width=0.96\textwidth]{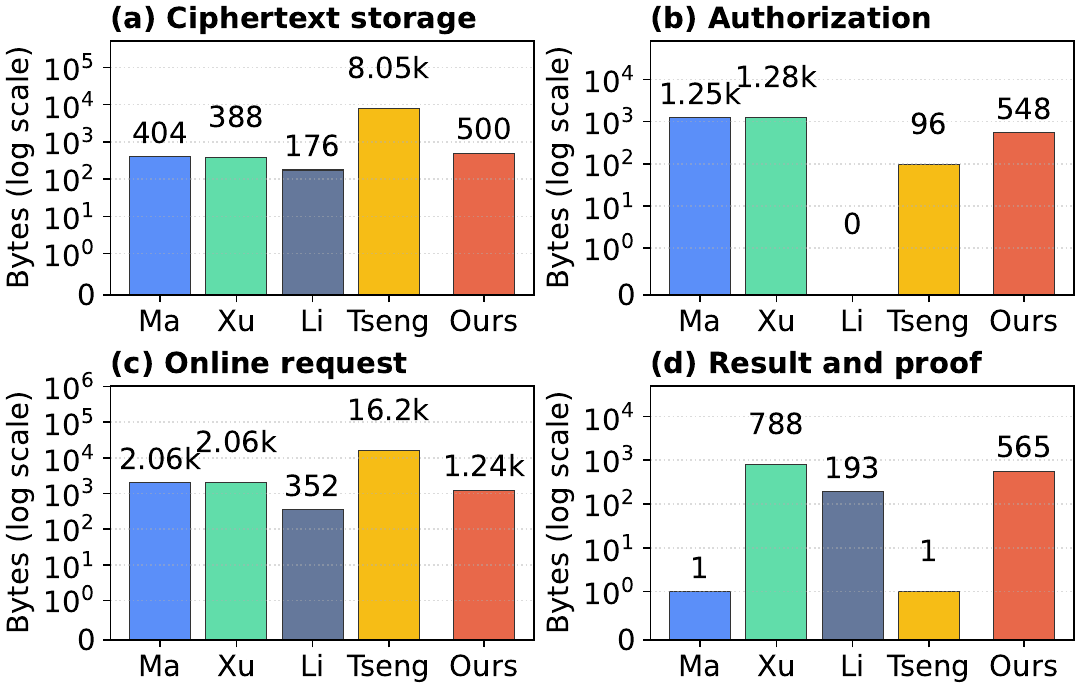}
  \caption{Storage and communication for one comparison in the selected modes of
  Ma et al.~\cite{Ma2015PKEETFA}, Xu et al.~\cite{Xu2017VPKEET}, Li et
  al.~\cite{Li2024PVPKEET}, Tseng et al.~\cite{Tseng2025TightPKEETFA}, and
  AVPKEET. Values are derived from their encoded objects under the common size
  parameters stated in Section~\ref{sec:communication-cost}. Result sizes use a
  one-byte decision for schemes without verification evidence, a positive
  designated-user proof for Xu et al., a public replica-consistency proof for Li
  et al., and the signed exact-request proof for AVPKEET. The logarithmic scale
  preserves both one-byte results and kilobyte-scale requests.}
  \label{fig:communication-comparison}
\end{figure}

\begin{table}[!t]
\caption{AVPKEET storage and online object sizes}
\label{tab:avpkeet-wire}
\centering
\scriptsize
\renewcommand{\arraystretch}{1.03}
\setlength{\tabcolsep}{3.2pt}
\begin{tabular}{@{}l r l@{}}
\toprule
\textbf{Object} & \textbf{Bytes} & \textbf{Use} \\
\midrule
Signed ciphertext $C$       & 500  & Persistent cloud storage \\
Query and token $(Q,\tau)$  & 548  & Owner-to-user authorization \\
Signed request $R$          & 1,244 & User-to-controller input \\
Signed proof $\Pi$          & 565  & Controller response \\
Public transcript $(R,\Pi)$ & 1,809 & Independent verification \\
\bottomrule
\end{tabular}
\end{table}

AVPKEET stores 500 bytes per encrypted label and exchanges 1,809 bytes for a
complete request--response transcript. Its 565-byte public proof is smaller than
the 788-byte positive proof in the selected Xu mode, while its request is larger
than the unbound Li replica test because it carries the owner query, exact
authorization, and requester signature. The data
therefore show a bounded communication cost for the additional authorization and
public-verification guarantees, rather than a claim that unlike protocols expose
identical functionality.

\section{Conclusion}
\label{sec:conclusion}

This paper presented AVPKEET, which combines fine-grained authorization and
publicly verifiable equality testing for encrypted records. An owner authorizes
one stored-ciphertext/query pair for one registered requester, and the resulting
proof is cryptographically bound to that complete signed request. The design
keeps bulk storage and forwarding at an untrusted cloud worker while confining
owner-specific test secrets and admission decisions to an isolated controller.
Within the stated threat model, the construction provides correctness, scoped
plaintext privacy, authorization unforgeability, public-verification soundness,
and equality-partition hiding for untested collections under the assumptions
defined in this paper. Our implementation further shows that these guarantees
have practical single-threaded computation and bounded communication costs.
Future work will investigate constructions based on more established assumptions,
lighter proof and controller processing, and carefully scoped cross-owner testing
that preserves exact-request authorization and public verifiability.

\FloatBarrier
\bibliographystyle{elsarticle-num}
\bibliography{ref}

\end{document}